\newcommand{\version}{May 07, 2007}
\def\stackunder#1#2{\mathrel{\mathop{#2}\limits_{#1}}}
\def\dint{\mathop{\displaystyle \int}}%
\numberwithin{equation}{section}
\newcommand{\bdm}{\begin{displaymath}}
\newcommand{\edm}{\end{displaymath}}
\newcommand{\bdn}{\begin{eqnarray}}
\newcommand{\edn}{\end{eqnarray}}
\newcommand{\bay}{\begin{array}{c}}
\newcommand{\eay}{\end{array}}
\newcommand{\ben}{\begin{enumerate}}
\newcommand{\een}{\end{enumerate}}
\newcommand{\beq}{\begin{equation}}
\newcommand{\eeq}{\end{equation}}
\newcommand{\R}{\mathbb{R}}
\newcommand{\RT}{\mathbb{R}^3}
\newcommand{\xv}{\vec{x}}
\newcommand{\xvp}{\vec{x}^{\prime}}
\newcommand{\rv}{\vec{r}}
\newcommand{\diff}{\mathrm{d}}
\newcommand{\gpf}{\mathcal{E}^{\mathrm{GP}}_{g,\Omega}}
\newcommand{\tgpf}{\tilde{\mathcal{E}}_{\varepsilon,\omega}^{\mathrm{GP}}}
\newcommand{\togpf}{\hat{\mathcal{E}}_{\Omega,\omega}^{\mathrm{GP}}}
\newcommand{\gpd}{\mathcal{D}^{\mathrm{GP}}}
\newcommand{\gpe}{E^{\mathrm{GP}}_{g,\Omega}}
\newcommand{\tgpe}{\tilde{E}_{\varepsilon,\omega}^{\mathrm{GP}}}
\newcommand{\togpe}{\hat{E}_{\Omega,\omega}^{\mathrm{GP}}}
\newcommand{\gpm}{\phi_{g, \Omega}^{\mathrm{GP}}}
\newcommand{\tff}{\mathcal{E}^{\mathrm{TF}}_{g,\Omega}}
\newcommand{\ttffo}{\mathcal{E}^{\mathrm{TF}}_{1,\omega}}
\newcommand{\ttfe}{E^{\mathrm{TF}}_{1,\omega}}
\newcommand{\tfeinf}{E^{\mathrm{TF}}_{0,1}}
\newcommand{\ttfm}{\rho^\mathrm{TF}_{1,\omega}}
\newcommand{\ttfsupp}{\mathrm{supp}\left(\ttfm\right)}
\newcommand{\bi}{\mathcal{B}_{\varepsilon}^j}
\newcommand{\trial}{\tilde{\phi}}
\newcommand{\modu}{\tilde{\varrho}_{\varepsilon}}
\newcommand{\cut}{\chi_{\varepsilon}}
\newcommand{\phase}{g_{\varepsilon}}
\newcommand{\const}{C_{\omega}}
\newcommand{\latt}{\mathcal{L}}
\newcommand{\spac}{\ell_{\varepsilon}}
\newcommand{\magnp}{\vec{A}_{\omega/\varepsilon}}
\newcommand{\Wo}{W_{\omega}}
\newcommand{\xij}{|\vec{x}_i-\vec{x}_j|}
\newcommand{\eps}{\epsilon}
\newcommand{\half}{\mbox{$\frac{1}{2}$}}
\newtheorem{theorem}{Theorem}[section]
\newtheorem{lemma}[theorem]{Lemma}
\newtheorem{proposition}[theorem]{Proposition}
\newenvironment{proof}{\emph{Proof:}}{\begin{flushright} $ \Box $ \end{flushright}}
\begin{document}

\markboth{\scriptsize{BCPY \version}}{\scriptsize{BCPY \version}}

\title{The TF Limit for Rapidly  Rotating Bose Gases\\
 in Anharmonic Traps}
\author{J.-B. Bru${}^{a}$, M. Correggi${}^{b}$, P. Pickl${}^{a}$, J. Yngvason${}^{a, c}$\\ \normalsize\it \hspace{-.5 cm}\\ \hspace{-.5 cm}\normalsize\it  ${}^{a}$Fakult\"at f\"ur Physik, Universit{\"a}t Wien,      \normalsize\it
 Boltzmanngasse 5, 1090 Vienna, Austria\\ ${}^{b}$\normalsize\it  Scuola Normale Superiore SNS,
 \normalsize\it Piazza dei Cavalieri 7,
56126 Pisa, Italy
\\  ${}^{c}$\normalsize\it Erwin Schr{\"o}dinger
Institute for Mathematical Physics,\\ \normalsize\it Boltzmanngasse 9,
1090 Vienna, Austria}
\date{\version}

\maketitle

\begin{abstract}
Starting from the full many body Hamiltonian we derive the leading order energy and density asymptotics for the ground state of a dilute, rotating Bose gas in an anharmonic trap in the \lq Thomas Fermi' (TF) limit when the Gross-Pitaevskii coupling parameter and/or the rotation velocity tend to infinity. Although the many-body wave function is expected to have a complicated phase, the leading order contribution to the energy  can be computed by minimizing a simple functional of the density alone.
\end{abstract}

\section{Introduction}

Rotating  Bose-Einstein condensates exhibit fascinating quantum phenomena like superfluidity and
quantization of vorticity  and their study is currently an active area of both
experimental and theoretical research. Much of the theoretical work
(see, e.g., the monograph \cite{Afta} where an extensive list of references can be found)
is based on an  effective description of the ground state in terms of the  Gross-Pitaeveskii (GP)
equation that has recently been  proved  \cite{LS06, S03} to be exact in a suitable limit.
The corresponding result  for the non-rotating case was obtained in \cite{LiebSeiringerYngvason1},
but the rotating case was a long-standing problem whose solution required different techniques
from those of \cite{LiebSeiringerYngvason1}. In fact, the rotating case differs markedly
from the non-rotating one  since the absolute many-body  ground state is in general not
the same as the bosonic ground state \cite{S03}.

The limit considered in  \cite{LS06, S03} is the GP  limit of the many-body
ground state which means that the particle number $N$ tends to infinity while
the GP parameter $Na$, with $a$ the scattering length of the interaction potential,
as well as the rotational velocity, $\Omega$, are kept fixed.  In several  experiments
the GP parameter can be quite large, however,  and also the rotational velocity can be
so large that the effect of the rotation becomes comparable with that of the interactions.
Some of the interesting phenomena expected under such conditions are discussed in \cite{B, baym, ECHSC, Fe, FiB, FZ,KTU,KB, Lund1, WP}.
To describe such cases theoretically  it is natural to consider the limit when $Na$ and/or $\Omega$ tend to infinity.
Since the error estimates in \cite{LS06, S03} are not uniform in these parameters the results
of  \cite{LS06, S03}  do not apply to this situation and a complementary investigation is called for.
In order to allow arbitrarily large rotational velocities we shall
require that the  trap potential increases more rapidly at
infinity than quadratically (anharmonic traps). 

In a recent paper \cite{CRY06} the limit of GP theory  (but not the many-body problem) for large $Na$ and large $\Omega$
was studied for a rotating 2D gas in a \lq flat' trap with walls that confine the gas to a fixed bounded region. (See also \cite{CRY07} for an extension to more general 2D traps.) It was proved that the leading order energy and density
asymptotics is correctly described by an energy functional of the
density without a gradient term. This functional for rapidly
rotating gases was first introduced and studied in \cite{FiB}.
Because of its formal analogies with the Thomas-Fermi (TF)  theory
for  fermions it is also referred to as a TF functional although
the physical situation is quite different.

In the present paper we derive the TF description from the many-body problem in 3D.
This might at first sight appear to be a simple combination of the results of  \cite{LS06, S03}
with those of \cite{CRY06} (extended to 3D and more general traps): first take the GP limit of the many-body theory and
then the TF limit of the GP theory. But such an argument is not valid because the error estimates
in \cite{LS06} blow up if $Na$ and/or $\Omega$ tend to infinity. In fact, it is clear that
it will at least be  necessary to require explicitly  that the gas is {\it dilute} in the sense
that the average particle distance is much larger than the scattering length. (This condition
is automatically fulfilled if $Na$ is kept fixed.) It is also instructive to compare with the
proof of  dimensional reduction of Bose gases in tightly confining traps \cite{LSY, SY} where
one also has the option of taking a limit in two steps. Here the 1D or 2D limits of the 3D GP
theory do not, in fact,  cover all cases that can occur when one starts with the 3D many-body theory and takes  its  tight confinement limit directly. In  the present situation, however,  the TF limit of the GP theory exhausts the leading order asymptotics  for the many-body ground state,  provided only the gas remains dilute as the limit is taken. Because the many-body wave function will  have a complicated phase
it  is not self-evident  that the ground state energy can be computed accurately by a functional of the density alone. That this is indeed possible is a basic message of the present  paper.

 The main
techniques applied in our derivation of the TF limit are on the
one hand an extension  of \cite{CRY06} to 3D and
anharmonic, homogeneous  potentials, and on the other hand the
techniques described in \cite{LSSY05}  for treating the many-body problem in the non-rotating case.
Additional tools are the diamagnetic inequality as well as the
method used in \cite{S03} to bound the many-body energy from above
by the GP energy in the rotating case.  Our results concern the
ground state energy and density to leading order in small
parameters (the reciprocal  of the coupling constant or
the rotational velocity as well as the ratio of $a$ to the mean
particle spacing).  In contrast to \cite{LS06} Bose Einstein
condensation (BEC) is not proved. In fact, even in the
non-rotating  case a proof of BEC in the TF
limit is still an open problem.

The variational wave functions we employ for deriving
upper bounds to the energy reflect the expected shorts scale structure due to interactions as well as the quantized vortices generated by the rotation. In 3D the exact vortex structure is presumably rather complicated, e.g.,  due to bending of vortex lines \cite{Afta1}.  Our trial functions do not model such details but  they are sufficient for  leading order calculations and leave room for  improvements.

The paper is organized as follows. In the next section we describe
the general setting and state our main results about the energy
and density asymptotics. Section 3  is concerned with the TF limit
of GP theory, generalizing the results of \cite{CRY06} to 3D and
homogeneous, anharmonic trapping potentials, as well as some basic properties of the TF theory needed for the proofs of the quantum mechanical (QM) limit theorems. In Section 4 we prove
the upper and lower bounds on the many-body quantum-mechanical energy and obtain limit theorems for the density as corollaries. Some additional properties of the TF density are discussed in an Appendix.
\section{General Setting and Main Results\label{Section Main result}}

We consider a system of  $N$ spinless bosons in $%
\RT$  with mass $m$, trapped in  an external potential $V$ and interacting with a nonnegative, radially symmetric  pair
potential $v$ of finite range. The basic quantum mechanical Hamiltonian in a reference
frame that rotates with angular velocity $\vec \Omega $ is
\begin{equation}\label{ham1}
    H_{N}\equiv \stackrel{N}{\stackunder{j=1}{\sum }}\left( -\Delta _{j}- \vec{L}_{j}\cdot \vec{\Omega} +V(\xv_{j})\right) +\stackunder{1\leq i<j\leq N}{\sum } v(|\xv_{i}-\xv_{j}|).
\end{equation}
where units have been chosen so that  $2m=\hbar =1$. Also\footnote{We use the
following notation: $ \xv \equiv  (x,y,z)$ will always denote a
point in $ \RT $, while $ \vec{r} $ and $ z $ are its projections
on the $x,y-$plane and the $ z-$axis respectively (cylindrical
coordinates), namely $ \xv \equiv  (\rv,z) $. Moreover $ |\xv|
$ and $ r \equiv  |\rv| $ will denote the moduli of the
corresponding vectors, whereas $\zeta \equiv x+iy$ will be the
complex number associated with the $x,y-$coordinates of $ \xv$.} $
\xv_{j}\in \RT\ $and $\vec{L}_{j}\equiv - i \xv_{j} \wedge \nabla
_{j}$ for $j=1,...,N$ are respectively the positions and the
angular momentum operators of the particles. The Hamiltonian operates
on symmetric (bosonic) wave functions in $L^{2}\left(
\R^{3N},\diff\xv_{1}...\diff\xv_{N}\right) \ $.

We keep the external potential fixed and choose the associated length scale
(the extension of the ground state of $-\Delta+V$) as a length unit.
In order to be able to vary the scattering length\footnote{Alternatively,
one could keep $a$  fixed and vary the external potential and hence the length scale.
What matters is only the ratio of $a$ to this scale.} of the interaction
potential $v$ with $N$ we write $v(|\vec x|)=a^{-2}v_1(|\vec x|/a)$ where $v_1$
is a potential with scattering length 1.  Then $v$ has scattering length $a$ and
for fixed $v_1$ the Hamiltonian is parametrized by  $N$ and $a$ besides $\vec \Omega$.
For convenience, we include a factor $4\pi$ in the definition of the GP parameter
\beq\label{gpparam} g\equiv 4\pi Na.\eeq
The ground state energy  of \eqref{ham1}, i.e., the infimum of its spectrum,
will be denoted by $E_{g,\Omega }^{\mathrm{QM}}\left( N\right) $.

Introducing  the vector potential $\vec{A}_\Omega\equiv
\hbox{$\frac 12$}\Omega (\vec{e}_{z} \wedge \xv)$ associated with
a rotation $\vec\Omega=\Omega \vec{e}_{z}$, where  $\vec{e}_{z}$
 is the unit vector in
$z$-direction, the Hamiltonian $H_{N}$ can be rewritten in the
form
\begin{equation}\label{ham2}
    H_{N}=\stackrel{N}{\stackunder{j=1}{\sum }}\left( \left[ -i\nabla_{j}-\vec{A}_\Omega(\xv_{j}) \right]^{2}+V(\xv_{j})-\hbox{$\frac{1}{4}$}\Omega ^{2}r_{j}^{2}\right) +%
\stackunder{1\leq i<j\leq N}{\sum }v(|\xv_{i}-\xv_{j}|),
\end{equation}
where $r \equiv  \left| \vec{e}_{z}\wedge \xv\right| $ is the
distance from the axis of rotation. The splitting of the term $- \vec{L}\cdot \vec{\Omega}$
into the contribution of the vector potential and the term $-\Omega
^{2}r^{2}/4$ corresponds respectively to the Coriolis and the centrifugal force
in the rotating frame. The vector potential is primarily responsible for the formation
of vortices while the centrifugal potential affects the overall density profile if the rotational velocity is high enough.

In order  that \eqref{ham2} is bounded from below and trapping for
all  $\Omega $ we require the external potential $V$ to be bounded
from below and moreover that \beq\label{stable}V( \xv)
-\hbox{$\hbox{$\frac{1}{4}$}$}\Omega^{2}r^{2}\to \infty\quad \hbox{
for}\quad  |\vec x|\to\infty.\eeq
Due to the particle repulsion and the centrifugal force
the gas cloud expands as $g$ or $\Omega$ tend to infinity so essentially only the
behavior of $V$  for large $|\vec x|$ matters. For simplicity we
shall assume that $V$ is  a homogeneous
function\footnote{Asymptotic homogeneity in the sense of \cite{Lieb2}
would also suffice.} of order $s>2,$ i.e. $V\left( \lambda
\xv\right) =\lambda ^{s}V\left( \lambda \xv\right) $ for any
$\lambda >0$ and $\xv\in \RT$, but we need {\it not} assume that $V$ is symmetric w.r.t.\ rotations about the $z$-axis.  In order to obtain explicit error estimates in Section 3 we shall assume that $V$ is twice continuously differentiable but H\"older continuity would in fact be sufficient for the main results. 

The case of  a \lq flat' trapping potential, i.e., $V=0$ within a bounded, open set $\mathcal B$ with a smooth boundary and $\infty$ outside,  can also be treated by our methods. This case corresponds formally to $s=\infty$ and the formulas for the energy and density asymptotics can be obtained as the $s\to\infty$ limit of the formulas for finite $s$. At the end of Section 4 we shall comment on the results for flat traps and the minor modifications required of the proofs.

The \emph{GP functional} in the rotating frame is defined as
\begin{equation}
    \mathcal{E}^{\mathrm{GP}}_{g,\Omega}\left[ \phi \right] \equiv \int_{\RT}\diff%
\xv\left\{ \left| \left[ \nabla - i \vec{A}_\Omega\right] \phi
\right| ^{2}+V\left| \phi \right| ^{2}-\hbox{$\frac{1}{4}$}\Omega
^{2}r^{2}\left| \phi \right| ^{2}+g\left| \phi \right|
^{4}\right\}   \label{GP functional}
\end{equation}
on the domain
\begin{equation}
    \mathcal{D}^{\mathrm{GP}}\equiv \left\{ \phi \in L^{4}\left( \RT\right)
:V\left| \phi \right| ^{2}\in L^{1}\left( \RT\right) \mathrm{\ and\ }%
\left[ \nabla -i\vec{A}_\Omega\right] \phi \in L^{2}\left(
\RT\right) \right\} .
\end{equation}
The corresponding energy is
\begin{equation}
E_{g,\Omega }^{\mathrm{GP}}\equiv \stackunder{\phi \in \mathcal{D}^{\mathrm{GP}%
},\left\| \phi \right\| _{2}=1}{\inf }\mathcal{E}_{g,\Omega}^{\mathrm{GP}}\left[ \phi
\right].   \label{GP energy}
\end{equation}
The infimum is, in fact, a minimum and we denote  any normalized  minimizer by $\phi^{\rm GP}_{g, \Omega}$.  The corresponding density is $\rho^{\rm GP}_{g, \Omega}\equiv |\phi^{\rm GP}_{g, \Omega}|^2$. The minimizer may not be unique because vortices can break rotational symmetry, but  any minimizer  satisfies the variational (GP) equation
\beq
\left[ -(\nabla - i \vec{A}_\Omega)^2+V-\hbox{$\frac{1}{4}$}\Omega
^{2}r^{2}+2g\left| \phi^{\rm GP}_{g, \Omega} \right|
^{2}\right]\phi^{\rm GP}_{g, \Omega}=\mu_{g,\Omega}^{\rm GP}\phi^{\rm GP}_{g, \Omega} \label{GPeq}
\eeq
where $\mu_{g,\Omega}^{\rm GP}$ is the GP chemical potential. Multiplying \eqref{GPeq} with $\overline{ \phi^{\rm GP}_{g, \Omega}}$ and integrating gives
\beq\label{GPchempot}
\mu_{g,\Omega}^{\rm GP}=E_{g,\Omega }^{\mathrm{GP}}+2g\Vert \rho^{\rm GP}_{g, \Omega}\Vert_2.
\eeq

In  \cite{LS06, S03}  it is proved that $E_{g,\Omega }^{\mathrm{QM}}(N)/NE_{g,\Omega }^{\mathrm{GP}}\to 1$ as $N\to\infty$ if $g$ and $\Omega$ are fixed.
In the present paper we are concerned with the situation where $g$ and/or  $\Omega$ tend to infinity together with $N$. As we shall show, the first term in \eqref{GP functional} is negligible in this limit and the ground state energy and density can be described  in terms of  the \emph {TF functional}
\begin{equation}
    \tff\left[ \rho \right] \equiv \int_{\RT}\diff%
\xv\left\{ V\rho -\hbox{$\frac{1}{4}$}\Omega ^{2}r^{2}\rho +g\rho
^{2}\right\} \label{TF functional}
\end{equation}
defined on the domain
\begin{equation}
    \mathcal{D}^{\mathrm{TF}}\equiv \left\{ \rho \in L^{2}\left( \RT\right)
:\rho \geq 0,V\rho \in L^{1}\left( \RT\right) \right\}
\end{equation}
with the energy
\begin{equation}
    E_{g,\Omega }^{\mathrm{TF}}\equiv \stackunder{\rho \in \mathcal{D}^{\mathrm{TF}%
},\left\| \rho \right\| _{1}=1}{\inf }\tff \left[ \rho
\right] .  \label{TF energy}\eeq

The minimization problem \eqref{TF energy} has a unique solution given by
\begin{equation}
\rho _{g,\Omega }^{\mathrm{TF}}\left( \vec{x}\right)
=\dfrac{1}{2g}\left[
\mu _{g,\Omega }^{\mathrm{TF}}+\hbox{$\frac{1}{4}$}\Omega ^{2}r^{2}-V\left( \vec{x}%
\right) \right] _{+}, \label{TF density}
\end{equation}
where $[\cdot]_+$ denotes the positive part and $\mu _{g,\Omega }^{\mathrm{TF}}$ is the TF chemical
potential determined by the normalization $||\rho _{g,\Omega }^{\mathrm{TF}}||_{1}=1$. Multiplying \eqref{TF density} by $\rho _{g,\Omega }^{\mathrm{TF}}$ and integrating gives
\beq
\mu _{g,\Omega }^{\mathrm{TF}%
}= E_{g,\Omega }^{\mathrm{TF}}+2g||\rho _{g,\Omega }^{\mathrm{TF}%
}||_{2}^{2}. \label{TFchempot}
\eeq
By simple rescaling (explained at the beginning of Section 3) we obtain the relations
\begin{equation}
g^{-{s}/{(s+3)}%
}E_{g,\Omega }^{\mathrm{TF}} =E_{1,\omega }^{\mathrm{TF}}\quad\hbox{and}\quad
g^{-{s}/{(s+3)}%
}\mu_{g,\Omega }^{\mathrm{TF}} =\mu_{1,\omega }^{\mathrm{TF}}
\label{TF big interaction}
\end{equation}
with \beq 
\omega\equiv  g^{-{(s-2)}/{\left( 2s+6\right) }}\Omega\label{omega}
\eeq and likewise \beq\label{TFdensityscaling} g^{{3}/{(s+3)}}\rho _{g,\Omega
}^{\mathrm{TF}}\left( g^{{1}/{(s+3)}}\vec{x}\right)=\rho
_{1,\omega }^{\mathrm{TF}}\left(
\vec{x}\right). \eeq The case $\omega=0$
corresponds  to the standard TF functional without rotation
whose relation to the many-body problem was already discussed in
\cite{LSSY05}.  Note also that $E_{1,\omega }^{\mathrm{TF}}$ is a
decreasing function of $\omega\geq 0$ with range $(-\infty, E_{1,0
}^{\mathrm{TF}}]$ with $E_{1,0
}^{\mathrm{TF}}>0$. In particular, there is an $\omega>0$ such that
$E_{1,\omega }^{\mathrm{TF}}=0$.

In the case that $\omega$ tends to infinity the rotational term completely dominates the interaction term. In this case it is appropriate to scale lengths with $\Omega^{2/(s-2)}$ rather than $g^{1/(s+3)}$ (cf. Section 3), obtaining
\begin{equation}
\Omega^{-2s/(s-2)}E_{g,\Omega }^{\mathrm{TF}} =E_{\gamma,1}^{\mathrm{TF}}\quad\hbox{and}\quad
\Omega^{-2s/(s-2)}\mu_{g,\Omega }^{\mathrm{TF}} =\mu_{\gamma,1}^{\mathrm{TF}}
\label{TFenergyscalingstrong}
\end{equation}
with \beq\label{gamma} \gamma\equiv  \Omega^{-2(s+3)/(s-2)}g=\omega^{-2(s+3)/(s-2)}
\eeq and  \beq \Omega^{6/{(s-2)}}\rho _{g,\Omega
}^{\mathrm{TF}}\left(\Omega^{
2/{(s-2)}}\vec{x}\right)=\rho _{\gamma,1
}^{\mathrm{TF}}\left(\vec{x}\right).\label{TFdensityscalingstrong} \eeq
Moreover, as $\omega\to \infty$, i.e., $\gamma\to 0$, we have
\begin{equation}
\lim_{\gamma\to 0 }E_{\gamma,1}^{\mathrm{TF}} =E_{0,1}^{\mathrm{TF}}=%
\stackunder{\xv \in \mathbb{R}^{3}}{\inf }\left\{
V(\xv)-\hbox{$\frac{1}{4}$}r^{2}\right\}<0  \label{TF strong rotation}
\end{equation}
while $\rho _{\gamma,1
}^{\mathrm{TF}}$ converges
 to a measure supported on the set $\mathcal{M}$ of
minima of the function $W(\vec x)\equiv V(\xv)-\hbox{$\frac{1}{4}$}r^{2}$. This together with
the other facts mentioned about the TF theory is discussed further
in Section 3.2.
% and in Appendix A.

The scaling properties of the TF theory already suggest that one should distinguish between the following three cases when the $N\to\infty$
limit of the many-body ground state with $g$ and/or $\Omega$ also tending to infinity is considered:
\begin{itemize}
\item {\bf Slow or moderate rotation}, $\omega\ll 1$: The effect of the rotation is negligible to leading order.\footnote{This regime could be subdivided further into 'slow' rotations, where vortices do not yet form, and 'moderate' rotations where vortices are present. This finer division goes, however, beyond the leading order considerations of the present paper.}
\item {\bf Rapid rotation}, $\omega\sim 1$: Rotational effects are comparable to those of the interactions.
\item {\bf Ultrarapid rotation}, $\omega\gg 1$: Rotational effects dominate.
\end{itemize}

 Moreover, a description in terms of the simple density functional \eqref{TF functional} can only be expected in a dilute limit.
 A convenient  measure for diluteness turns out to be smallness of the parameter%\footnote{The sup norm of the TF density
% could be replaced by its average value, see Appendix.}
 \beq a^3N\Vert \rho_{g,\Omega }^{\mathrm{TF}}\Vert_{\infty }\sim N^{-2}g^3\Vert \rho_{g,\Omega }^{\mathrm{TF}}\Vert_{\infty}.\eeq

Our main result about the energy asymptotics is the following

 \begin{theorem}[QM energy asymptotics]
\label{TF theorem 1} \mbox{} \newline
Let  $V$  be a homogenous potential of order $s>2$, define $g=4\pi aN$ and $\omega=g^{-{(s-2)}/{\left(2 s+6\right) }}\Omega$,  and assume that
$N^{-2}g^3\Vert \rho_{g,\Omega }^{\mathrm{TF}}\Vert_{\infty}\to 0$ as $N\to\infty$.

\noindent\hskip .5cm{\rm (i)} If $g \to\infty$  and  $\omega\to 0$ as $N\to\infty$, then $ \stackunder{N\rightarrow {}\infty }{\lim }\left\{ g^{-s/(s+3)}N^{-1} E_{g,\Omega }^{\mathrm{QM}}\left( N\right)  \right\}
=E^{\rm TF}_{1,0}$;

\noindent\hskip .5cm{\rm (ii)} If $g\to\infty$  and $\omega>0$ is fixed as $N\to\infty$, then  $\stackunder{N\rightarrow {}\infty }{\lim }\left\{ g^{-s/(s+3)}N^{-1} E_{g,\Omega }^{\mathrm{QM}}\left( N\right)  \right\}
=\ttfe$;

\noindent\hskip .5cm{\rm (iii)} If $\Omega\to \infty$ and
$\omega\to\infty $ as $N\to\infty$, then $\stackunder{N\rightarrow
{}\infty }{\lim }\left\{\Omega^{- 2s/(s-2)}N^{-1} E_{g,\Omega
}^{\mathrm{QM}}\left( N\right)
 \right\} =\tfeinf$.
 \end{theorem}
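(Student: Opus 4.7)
My plan is to prove Theorem~\ref{TF theorem 1} by establishing matching upper and lower bounds for $N^{-1}E_{g,\Omega}^{\mathrm{QM}}(N)$, both of which compare to $\tfe$ with errors that vanish under the diluteness hypothesis $N^{-2}g^{3}\Vert\tfm\Vert_{\infty}\to 0$. Once these bounds are in place the three statements (i)--(iii) follow from the scaling identities \eqref{TF big interaction}--\eqref{TFdensityscaling} and \eqref{TFenergyscalingstrong}--\eqref{TFdensityscalingstrong}, combined with the continuity $\ttfe\to E^{\mathrm{TF}}_{1,0}$ as $\omega\to 0$ for case (i) and $E^{\mathrm{TF}}_{\gamma,1}\to\tfeinf$ as $\gamma\to 0$ for case (iii), both indicated after \eqref{TF strong rotation}.

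For the upper bound I would follow the strategy of \cite{S03} and construct a symmetric Jastrow--Hartree trial function
\begin{equation*}
\Psi_N(\vec x_1,\ldots,\vec x_N)=\Bigl(\prod_{i<j}F(|\vec x_i-\vec x_j|)\Bigr)\prod_{k=1}^{N}\phi(\vec x_k),
\end{equation*}
where $F$ is built from the zero-energy scattering solution of $v$ cut off at a radius $b$ with $a\ll b\ll\Vert\tfm\Vert_{\infty}^{-1/3}$, and $\phi$ is a regularized version of $\sqrt{\tfm}$ multiplied by a phase $e^{iS(\vec x)}$ chosen so that $|\nabla S-\vec A_\Omega|^{2}$ is small on the bulk of $\mathrm{supp}(\tfm)$. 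A Dyson-type evaluation of $\langle\Psi_N,H_N\Psi_N\rangle/\Vert\Psi_N\Vert^{2}$ then yields $N\gpf[\phi]$ plus a pair-interaction remainder of order $Na^{3}\Vert\tfm\Vert_{\infty}$, which is $o(N|\tfe|)$ under the diluteness assumption; the gradient and phase energies are absorbed using the GP-to-TF upper bound of Section~3, which by construction extends \cite{CRY06} to the homogeneous anharmonic setting in 3D.

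The lower bound is the harder direction. Following \cite{LSSY05} I would partition $\RT$ into cells of side $\spac$ satisfying $a\ll\spac\ll L$, where $L$ is the characteristic TF length ($g^{1/(s+3)}$ in cases (i), (ii) and $\Omega^{2/(s-2)}$ in case (iii)), and apply the generalized Dyson lemma in each cell to replace the hard pair interaction by an effective one-body term proportional to $a$ times the local density, at the cost of a fraction $\varepsilon_1$ of the kinetic energy. The diamagnetic inequality applied to the remaining magnetic kinetic energy $|(\nabla_j-i\vec A_\Omega(\vec x_j))\Psi_N|^{2}$ reduces it to $(1-\varepsilon_1)\bigl|\nabla_j|\Psi_N|\bigr|^{2}$; together with the centrifugal term $-\tfrac14\Omega^{2}r_j^{2}$ and Neumann bracketing on the cells this produces a Riemann sum whose continuum limit is $\tff$ evaluated on the one-particle density of $\Psi_N$, yielding $E_{g,\Omega}^{\mathrm{QM}}(N)\ge N\tfe$ up to lower-order corrections.

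The main obstacle will be uniformity of all errors across the three regimes. In case (iii) the rescaled density $\rho^{\mathrm{TF}}_{\gamma,1}$ concentrates near the set $\mathcal{M}$ of minima of $W(\vec x)=V(\vec x)-\tfrac14 r^{2}$ as $\gamma\to 0$, so $\Vert\tfm\Vert_{\infty}$ in original variables can grow rapidly and the interplay between $a$, $\spac$, and the concentration scale must be coordinated very carefully against the diluteness hypothesis. A secondary delicate point is that, in contrast to the non-rotating TF limit of \cite{LSSY05}, the diamagnetic inequality sacrifices the vortex-lattice contribution to the kinetic energy; the key message of the theorem is precisely that this contribution is $o(|\tfe|)$ and therefore harmless to leading order, which is what justifies describing the ground state by the simple density functional $\tff$ without a gradient term.
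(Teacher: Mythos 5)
Your overall architecture (matching upper and lower bounds, then the scaling identities plus continuity of $\ttfe$ in $\omega$ and of $E^{\mathrm{TF}}_{\gamma,1}$ in $\gamma$) is the same as the paper's, and for cases (i)--(ii) your lower bound is essentially the paper's: diamagnetic inequality, then the non-rotating machinery of \cite{LSSY05} (Neumann boxes plus the homogeneous-gas bound of \cite{LiebYngvason1}). One caution there: the paper does not pass to ``$\tff$ evaluated on the one-particle density'' directly; it first linearizes using $V(\vec x)\geq \mu^{\mathrm{TF}}_{g,\Omega}+\frac14\Omega^2r^2-2g\tfm(\vec x)$, which reduces everything to the chemical potential plus a one-body/two-body quadratic form that the box method can handle, and also note that normalizing errors by $|\tfe|$ is unsafe since $\ttfe$ can vanish at a critical $\omega$.

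The one genuine problem is your lower bound in case (iii). Running the cell decomposition against a TF density that concentrates on the lower-dimensional set $\mathcal M$ (with $\Vert\rho^{\mathrm{TF}}_{\gamma,1}\Vert_\infty\sim\gamma^{-3/5}\to\infty$) is not just ``delicate'': the Riemann approximation of $\int(\rho^{\mathrm{TF}}_{\gamma,1})^2$ over cells of fixed side length does not close, and the coordination you flag as an obstacle is in fact not needed. In the ultrarapid regime the kinetic and interaction terms are both nonnegative and subleading, so the paper simply discards them and gets $\Omega^{-2s/(s-2)}N^{-1}E^{\mathrm{QM}}_{g,\Omega}(N)\geq\inf W=\tfeinf$ with no diluteness assumption at all; you should do the same. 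On the upper bound your route differs instructively from the paper's for $\omega<\infty$: you build the one-body factor directly from a regularized $\sqrt{\tfm}$ with a vortex phase, whereas the paper uses $G=\prod_i\gpm(\vec x_i)$, exploits the GP equation to integrate by parts, and then must prove the nontrivial comparison $\Vert\rho^{\mathrm{GP}}_{g,\Omega}\Vert_\infty\leq 2\Vert\tfm\Vert_\infty(1+o(1))$ to convert the diluteness hypothesis into the form needed for the Dyson estimates. Your TF-based choice avoids that comparison (which is exactly why the paper itself switches to a regularized TF density plus a giant-vortex phase in case (iii), where the comparison fails), at the price of having to redo the partial-integration step without the GP variational equation --- the paper handles this via Cauchy--Schwarz. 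Finally, you should use the genuine Dyson form $F=\prod_iF_i$ with $F_i=f(\min_{j<i}|\vec x_i-\vec x_j|)$ rather than the full Jastrow product $\prod_{i<j}F(|\vec x_i-\vec x_j|)$; the estimates \eqref{equation for interaction}--\eqref{equation for external potential} of \cite{LiebSeiringerYngvason1} are tailored to the former and do not transfer verbatim to the latter.
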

Although stated separately,  cases (i) and (ii) can, in fact, be
treated together  because the convergence of the scaled energy is uniform in $\omega$  as long
as
$\omega$ stays bounded.

Besides the energy asymptotics we shall also consider the convergence of the quantum mechanical particle density of  ground states or approximate ground states. We say that a sequence of bosonic, normalized  wave functions $\Psi_N\in L^2(\mathbb R^{3N})$ (depending also on the parameters  $g$ and $\Omega$) is an {\it approximate ground state} if
\beq
\left\langle \Psi _{N},H_{N}\Psi _{N}\right\rangle/ E_{g,\Omega }^{\mathrm{QM}}\left( N\right)\to 1.\eeq(Recall that $E_{g,\Omega }^{\mathrm{QM}}\left( N\right)$ is, by definition, the infimum of the spectrum of \eqref{ham1}.)  The particle density, normalized so that its integral is 1, is defined by
\begin{equation}
\rho _{N,g,\Omega}^{\rm QM}\left( \vec{x}\right) \equiv \int_{\R^{3(N-1)}}\left|
\Psi
_{N}\left( \vec{x},\vec{x}_{2},...,\vec{x}_{N}\right) \right| ^{2}\mathrm{d}%
\vec{x}_{2}....\mathrm{d}\vec{x}_{N}.\label{particle density}
\end{equation}
To ensure convergence of $\rho _{N,g,\Omega}^{\rm QM}\left(
\vec{x}\right) $ we have to rescale it in accord with
\eqref{TFdensityscaling}. For bounded $\omega$ convergence of the density 
follows from the convergence of the energy  using standard arguments
\cite {Griffiths1, LiebSimon1977}.
\begin{theorem}[QM density asymptotics for {\bf $\omega<\infty$}]
    \label{TF theorem 2}
    \mbox{} \\
    Under the conditions of Theorem \ref{TF theorem 1}
(i) or (ii) we have \beq g^{3/(s+3)}\rho _{N,g,\Omega
}^{\mathrm{QM}}\left( g^{1/(s+3)}\vec{x}\right)\to\rho
_{1,\omega }^{\mathrm{TF}}\left(
\vec{x}\right)\label{QMdensityscaling} \eeq in weak $L^1$ sense.
\end{theorem}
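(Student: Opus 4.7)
The proof follows the Griffiths--Hellmann--Feynman perturbation scheme referenced at \cite{Griffiths1, LiebSimon1977}. For $f\in C_c(\RT)$ and $\delta\in\R$, I introduce the perturbed Hamiltonian
\[
H_N^{\delta} \equiv H_N + \delta\,g^{s/(s+3)}\sum_{j=1}^N f\bigl(g^{-1/(s+3)}\vec{x}_j\bigr),
\]
with ground state energy $E^{\rm QM}_{g,\Omega,\delta}(N)$, together with the analogously perturbed TF functional obtained by replacing $V$ with $V+\delta g^{s/(s+3)}f(g^{-1/(s+3)}\,\cdot\,)$. The scaling argument that produces \eqref{TF big interaction}--\eqref{TFdensityscaling} extends verbatim to show that the rescaled perturbed TF energy equals $E^{\rm TF}_{1,\omega,\delta f}$, the TF energy computed from the functional with external potential $V+\delta f$ and rotation parameter $\omega$.

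The argument then has two main inputs. First, the conclusion of Theorem \ref{TF theorem 1} should apply to the perturbed problem, so that $g^{-s/(s+3)}N^{-1}E^{\rm QM}_{g,\Omega,\delta}(N)\to E^{\rm TF}_{1,\omega,\delta f}$ as $N\to\infty$ for every fixed $\delta$. Second, both $\delta\mapsto g^{-s/(s+3)}N^{-1}E^{\rm QM}_{g,\Omega,\delta}(N)$ and $\delta\mapsto E^{\rm TF}_{1,\omega,\delta f}$ are concave, as infima over families of $\delta$-affine functions. Uniqueness of the TF minimizer (cf.\ \eqref{TF density}) implies that $\delta\mapsto E^{\rm TF}_{1,\omega,\delta f}$ is differentiable at $\delta=0$ with derivative $\int f\,\rho^{\rm TF}_{1,\omega}\,\diff\vec y$.

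For an approximate ground state $\Psi_N$ in the sense defined just before the theorem, the variational bound $\langle\Psi_N,H_N^\delta\Psi_N\rangle \geq E^{\rm QM}_{g,\Omega,\delta}(N)$ together with the change of variable $\vec y = g^{-1/(s+3)}\vec x$ gives, for $\delta>0$,
\[
\int f(\vec y)\,g^{3/(s+3)}\rho^{\rm QM}_{N,g,\Omega}\bigl(g^{1/(s+3)}\vec y\bigr)\diff\vec y \;\geq\; \delta^{-1}\Bigl[g^{-s/(s+3)}N^{-1}E^{\rm QM}_{g,\Omega,\delta}(N) - g^{-s/(s+3)}N^{-1}\langle\Psi_N,H_N\Psi_N\rangle\Bigr],
\]
with the reversed inequality for $\delta<0$. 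Sending $N\to\infty$ (using the approximate ground state hypothesis and the first input above) and then $\delta\to 0^\pm$ (using the second input) sandwiches the left-hand integral between the matching one-sided derivatives, proving convergence against arbitrary $C_c$ test functions. Promotion to the full weak $L^1$ topology then follows from the $L^1$-normalization of the densities and tightness, which is provided by the confining growth \eqref{stable} together with the uniform $L^\infty$ bound on the rescaled TF density.

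The key technical point to verify is the extension of Theorem \ref{TF theorem 1} to the perturbed potential $V+\delta f$ uniformly in $\delta$ on compact intervals. Since $f$ is bounded and compactly supported, it alters neither the growth condition \eqref{stable} nor the homogeneous bulk of $V$, so the upper bound construction (TF-shaped trial states with short-scale Jastrow correlations) and the lower bound construction of Section 4 (box decomposition combined with the Dyson lemma, the diamagnetic inequality, and control of the TF functional on length scales intermediate between $a$ and the mean spacing) should go through with only minor modification of error constants. Granted this robustness, the Griffiths--Lieb--Simon convexity argument closes the proof mechanically.
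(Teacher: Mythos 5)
Your proposal is correct and follows essentially the same route as the paper: the Griffiths--Lieb--Simon perturbation argument, extending Theorem \ref{TF theorem 1} to the potential perturbed by a scaled bounded $f$, using the Rayleigh--Ritz variational inequality for an approximate ground state, and differentiability of the perturbed TF energy at zero coupling (from uniqueness of the TF minimizer) to sandwich $\langle f,\widetilde\rho_N\rangle$ between the one-sided derivatives. The only cosmetic difference is that the paper works directly with bounded $f$ (which is what weak $L^1$ convergence requires) rather than passing through $C_c$ test functions and a tightness argument.
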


The case of ultrarapid rotations, $\omega\to \infty$,  is a little
more delicate. Recall that $\mathcal{M}\subset\mathbb R^3$ was
defined as the set where the  function $V(\xv)-\hbox{$\frac{1}{4}$}r^{2}$
attains its minimum. In Lemma 3.2  it is shown that this set is a
subset of a cylinder and that the scaled TF density
\eqref{TFdensityscalingstrong} is eventually concentrated
on $\mathcal{M}$. In physical terms, the ultrastrong centrifugal
forces  outweigh the repulsion between particles and constrain them
to the boundary of the available region (in the scaled variables).
We show that the same holds for the scaled
QM density.

\begin{theorem}[QM density asymptotics for {\bf $\omega\to \infty$}]
    \label{TF theorem 3}
    \mbox{} \\
    Under the conditions of Theorem \ref{TF theorem 1}
(iii) the scaled particle density \beq\label{qmdens}\Omega^{6/{(s-2)}}\rho
_{N,g,\Omega }^{\mathrm{QM}}\left( \Omega^{
2/{(s-2)}}\vec{x}\right)\eeq becomes
concentrated on the set
$\mathcal{M}$ as $N\rightarrow \infty $ in the sense that the integral of \eqref{qmdens} over any measurable set that has a strictly positive distance  from $\mathcal{M}$ tends to zero.
\end{theorem}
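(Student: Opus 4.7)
The Griffiths--Lieb--Simon perturbation argument used for Theorem \ref{TF theorem 2} is unavailable here, since the limit object $\rho^{\rm TF}_{\gamma,1}$ as $\gamma\to 0$ is a singular measure supported on $\mathcal{M}$ rather than an $L^1$ function. The plan is instead to extract concentration on $\mathcal{M}$ directly from the energy asymptotics of Theorem \ref{TF theorem 1}(iii), exploiting the fact that in the ultrarapid regime the one-body effective potential $V-\tfrac14\Omega^2r^2$ alone captures the leading order of the ground state energy.

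\textbf{Step 1: one-body lower bound and rescaling.} Let $\Psi_N$ be an approximate ground state. Each magnetic kinetic expectation $\langle\Psi_N,(-i\nabla_j-\vec A_\Omega)^2\Psi_N\rangle$ is non-negative, and so is $\langle\Psi_N,v(|\vec x_i-\vec x_j|)\Psi_N\rangle$ since $v\geq 0$. Dropping these terms and using bosonic symmetry to rewrite the one-body potential contributions in terms of the density gives
\begin{equation}
\frac{1}{N}\langle\Psi_N,H_N\Psi_N\rangle\;\geq\;\int_{\RT}\bigl(V(\vec x)-\tfrac14\Omega^2 r^2\bigr)\rho^{\rm QM}_{N,g,\Omega}(\vec x)\,\diff\vec x.
\end{equation}
I then change variables $\vec x=\Omega^{2/(s-2)}\vec y$: the degree-$s$ homogeneity of $V$ and the quadratic scaling of $r^2$ produce a common prefactor $\Omega^{2s/(s-2)}$. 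Writing $W(\vec y)\equiv V(\vec y)-\tfrac14 r^2$ and letting $\tilde\rho_N$ be the rescaled density in \eqref{qmdens}, which remains normalized to $1$, the bound becomes
\begin{equation}
\frac{\Omega^{-2s/(s-2)}}{N}\langle\Psi_N,H_N\Psi_N\rangle\;\geq\;\int_{\RT}W(\vec y)\,\tilde\rho_N(\vec y)\,\diff\vec y.
\end{equation}

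\textbf{Step 2: squeeze to the pointwise minimum.} Theorem \ref{TF theorem 1}(iii) and the approximate-ground-state condition force the left-hand side above to tend to $E^{\rm TF}_{0,1}=\inf_{\vec y}W(\vec y)$. The reverse inequality $\int W\tilde\rho_N\geq\inf W$ is automatic from $\tilde\rho_N\geq 0$ and $\int\tilde\rho_N=1$, so in fact $\int W\tilde\rho_N\to\inf W$. For any measurable $K$ with $\mathrm{dist}(K,\mathcal{M})>0$, set $\eta\equiv\inf_K W-\inf W$; splitting the integral yields
\begin{equation}
\int W\tilde\rho_N\;\geq\;(\inf W+\eta)\int_K\tilde\rho_N+\inf W\int_{\RT\setminus K}\tilde\rho_N\;=\;\inf W+\eta\int_K\tilde\rho_N,
\end{equation}
so $\int_K\tilde\rho_N\to 0$ as soon as $\eta>0$, which is exactly the claim of the theorem.

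\textbf{Main obstacle.} The one non-routine point is the strict positivity of $\eta$. The trapping hypothesis combined with the degree-$s$ homogeneity of $V$ forces the restriction of $V$ to the unit sphere to be bounded below by a positive constant (a zero on the sphere would propagate outward along a ray, contradicting $V\to\infty$ at infinity), hence $V(\vec y)\geq c_0|\vec y|^s$ and $W(\vec y)\to+\infty$ as $|\vec y|\to\infty$. Any $W$-minimizing sequence in $K$ is therefore bounded and has a limit point, which by continuity of $W$ must lie in the closed minimum set $\mathcal{M}$---contradicting $\mathrm{dist}(K,\mathcal{M})>0$ unless $\eta>0$. All other ingredients are elementary; in particular, no information about the phase or vortex structure of the many-body ground state beyond the defining property of approximate ground states enters the argument.
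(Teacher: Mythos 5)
Your proposal is correct and follows essentially the same route as the paper: drop the nonnegative magnetic kinetic and interaction terms, rescale so that the remaining one-body term reads $\int W\tilde\rho_N\geq \inf W + \eta\int_K\tilde\rho_N$ with $\eta=\inf_K W-\inf W>0$ on any set $K$ at positive distance from $\mathcal M$, and squeeze using the energy convergence of Theorem \ref{TF theorem 1}(iii). Your explicit justification that $\eta>0$ (via $W\to\infty$ at infinity and continuity) fills in a detail the paper states without proof, but the argument is otherwise the same.
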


The proofs of Theorems \ref{TF theorem 1}--\ref{TF theorem 3} will be given in several steps. In the next section, we analyze  the asymptotic properties of the
GP and TF theories when $g$ and/or $\Omega$ tend to infinity.  Section 4 contains the upper and lower bounds to the  quantum mechanical energy that complete the proof of Theorem \ref{TF theorem 1}. Theorem \ref{TF theorem 2} is derived from the convergence of the energy by the arguments of \cite {Griffiths1, LiebSimon1977} and Theorem \ref{TF theorem 3} is proved by showing that any mass outside the set ${\mathcal{M}}$ would be in conflict with Theorem \ref{TF theorem 1} (iii).

%%%%%%%%%%%%%%%%%%%%%%%%%%%

\section{From GP to TF \label{Section GP to TF}}

In this section we consider the TF limit of GP theory.  Since for $s<\infty$ the GP and TF minimizers spread out over an increasingly large region  as $g$ and/or $\Omega$ tend to infinity we  write
$\xv=\lambda\xvp$ and $\phi(\vec x)=\lambda^{-3/2}\phi'(\xvp)$ with a suitable length scale  $\lambda$ depending on $g$ or $\Omega$ so that the relevant $\vec x'$ remain essentially bounded. We have $\| \phi^{\prime} \|_{2} = \| \phi
\|_{2} = 1 $, and since $V(\vec x)=\lambda^{s}V(\vec x')$ by assumption,  the GP functional can be written
\begin{equation}
    \label{GP functional rescaled}
    \mathcal{E}^{\mathrm{GP}}_{g,\Omega} \left[ \phi \right] = \lambda^{-2} \int_{\RT} \diff \xvp \: \left\{ \left| \left[ \nabla ^{\prime }- i \vec{A}_{\lambda^2\Omega}(\xvp) \right] \phi^{\prime} \right|^{2}+ \lambda^{s+2} V(\xvp) \left| \phi ^{\prime} \right|^{2} - \hbox{$\frac{1}{4}$}(\lambda^2 \Omega)^{2}{r^{\prime}}^{2} \left| \phi ^{\prime} \right|^{2} + g \lambda ^{-1}\left| \phi ^{\prime} \right|^{4} \right\}.
\end{equation}
We now distinguish  two cases. When the rotational contribution to the energy is smaller than or  at most comparable to the interaction energy we equate $\lambda^{s+2}$ with $g\lambda^{-1}$, i.e., choose
\beq\lambda=g^{1/(s+3)}.\label{lambda1}\eeq
For convenience and comparison with \cite{CRY06, CRY07} and \cite{Afta} we define a small parameter $\varepsilon$ by
\beq 1/\varepsilon ^{2} \equiv  \lambda^{s+2}=\lambda^{-1}g=g^{{(s+2)}/{(s+3)}} \label{epsilon_definition}. \eeq
The parameter $\omega=g^{-(s-2)/(2s+6)}\Omega$,  that measures the relative strength of the rotation with respect to the interactions can then be written
\beq\omega=\varepsilon^{{(s-2)}/{(s+2)}}\Omega.
\eeq
We now define a rescaled GP functional $ \tgpf$ by writing
$\mathcal{E}^{\mathrm{GP}}_{g,\Omega} \left[ \phi \right] =\lambda^{-2}\tgpf \left[ \phi' \right]$, or explicitly, dropping the primes,
\begin{equation}
    \label{GP functional rescaled 1}
    \tgpf \left[ \phi \right] \equiv \int_{\RT}\diff\xv \: \left\{ \left| \left[ \nabla -i \magnp \right] \phi \right| ^{2}+\frac{1}{\varepsilon ^{2}}\left( V\left| \phi \right| ^{2}- \frac{\omega^{2}r^{2}}{4}\left| \phi \right| ^{2}+\left| \phi \right|^{4}\right) \right\} .
\end{equation}
In the next subsection we study the $\varepsilon\to 0$ limit of this functional with $\omega<\infty$ fixed or tending to zero.\\

When the rotation dominates the interaction we  take $\lambda^{s+2}$
to be equal to  $(\lambda^2 \Omega)^2$ in \eqref{GP
functional rescaled}, i.e, we take
\beq\lambda=\Omega^{2/{(s-2)}}.\eeq Proceeding as before we now
write $\gpf \left[ \phi \right] =\lambda^{-2}\togpf \left[ \phi'
\right]$ with \beq
    \label{GP functional rescaled 2}
    \togpf \left[ \phi \right] \equiv  \int_{\RT}\diff\xv \: \left\{ \left| \left[\nabla -i \vec A_{\Omega '}
\right] \phi \right| ^{2}+{\Omega'}^2\left[ \left[
V-\hbox{$\frac{1}{4}$}r^{2}\right] \left| \phi \right| ^{2}+\gamma\left| \phi \right|^{4} \right] \right\} \eeq where
\beq{\Omega'}\equiv\Omega^{{(s+2)}/{(s-2)}}\quad
\hbox{and}\quad\gamma=\omega
^{-{2(s+3)}/{(s-2)}}.\eeq The limit
$\Omega\to\infty$, $\omega\to\infty$ (i.e., $\Omega'\to\infty$, $\gamma\to 0$) will be considered in
Subsection 3.2.

\subsection{The Regime $ \omega < {}\infty $}

For $\omega<\infty$ fixed or tending to zero, we study the rescaled GP functional introduced in
\eqref{GP functional rescaled 1} and define $\tgpe $  by
\beq
    \label{GPte}
    \tgpe \equiv  \inf_{\phi \in \gpd, \| \phi \|_2 = 1} \tgpf[\phi]=g^{-{2}/{(s+3)}}E_{g,\Omega
    }^{\mathrm{GP}}.
\eeq The asymptotic behavior of $ g^{-2/(s+3)}\gpe $ as $ g \to {}\infty $ is
then given by the limit of $\tgpe$ as $ \varepsilon
\to 0 $, see (\ref{epsilon_definition}). We state this result
about $ \gpe $ in the following theorem.

\begin{theorem}[GP energy asymptotics for $\omega<\infty$]
    \label{EnAsOmegaconst}
    \mbox{} \\
    Under the conditions of Theorem \ref{TF theorem 1} (i) or
    (ii)  one has, as $ g \to \infty $,
    \beq\label{3.10}
        g^{-s/(s+3)} \gpe = \ttfe + O \left( g^{-{(s+2)}/{(2s+6)}} \log g \right).
    \eeq
\end{theorem}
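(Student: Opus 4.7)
The goal, after invoking the scaling identity \eqref{GPte}, is to establish
\beq
\tgpe = \varepsilon^{-2}\ttfe + O(\varepsilon^{-1}\log\varepsilon^{-1}) \quad \hbox{as } \varepsilon\to 0,
\eeq
uniformly in $\omega$ on bounded intervals. The \emph{lower bound} is immediate from the diamagnetic inequality $|[\nabla-i\magnp]\phi|^2\geq|\nabla|\phi||^2\geq 0$: discarding the magnetic kinetic term in \eqref{GP functional rescaled 1} gives $\tgpf[\phi]\geq\varepsilon^{-2}\ttff[|\phi|^2]\geq\varepsilon^{-2}\ttfe$ for every normalized $\phi$, hence $\tgpe\geq\varepsilon^{-2}\ttfe$ with no error term at all.

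The real work is the \emph{upper bound}, where a trial function must accommodate two competing obstructions: the square-root singularity of $\sqrt{\ttfm}$ at the boundary of $\ttfsupp$, and the fact that $\magnp=\tfrac12(\omega/\varepsilon)\,\vec e_z\wedge\vec x$ has curl $(\omega/\varepsilon)\vec e_z\neq 0$, so no global phase can cancel it without the creation of vortices. Adapting the two-dimensional construction of \cite{CRY06} to the 3D anharmonic setting, I would take
\beq
\trial(\vec x) = c_\varepsilon\,\cut(\vec x)\,e^{i\phase(\vec x)}\,\sqrt{\modu(\vec x)},
\eeq
where $\modu$ is a Lipschitz regularization of $\ttfm$ on a boundary layer of width $\delta$; the phase $\phase$ is built cell-by-cell by tiling the $xy$-projection of $\ttfsupp$ with squares of side $\spac$ chosen so that exactly one unit of magnetic flux pierces each cell, placing one straight vortex line through each cell centre parallel to the $z$-axis, and selecting $\phase$ so that $\nabla\phase$ approximates $\magnp$ up to $O((\omega/\varepsilon)\spac)$ inside each cell; $\cut$ is a soft cutoff suppressing $\trial$ on tubes of core radius $a_c$ around each vortex line and on the boundary layer; and $c_\varepsilon=1+o(1)$ restores the normalization. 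Because $\magnp$ is $z$-translation invariant, the 3D phase construction reduces to the CRY06 2D construction extended trivially in $z$.

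Expanding $\tgpf[\trial]$, the excess over $\varepsilon^{-2}\ttfe$ splits into four contributions: (a) a boundary-layer and core-excision error $O(\varepsilon^{-2}\delta)$ in the TF functional of $|\trial|^2$; (b) a smoothing kinetic energy $\int|\nabla\sqrt{\modu}|^2 = O(\log\delta^{-1})$ arising from the square-root singularity; (c) vortex-core kinetic energy $O((\omega/\varepsilon)\log(\spac/a_c))$, coming from the $\sim\omega/\varepsilon$ density of vortices with a logarithmic contribution per core; and (d) a phase-approximation error $\int\modu|\nabla\phase-\magnp|^2 = O((\omega/\varepsilon)^2\spac^2)$. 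Taking $\spac\sim\sqrt{\varepsilon/\omega}$, $a_c\sim\varepsilon$ (the healing length), and $\delta\sim\varepsilon\log\varepsilon^{-1}$ forces every piece to be at most $O(\varepsilon^{-1}\log\varepsilon^{-1})$, yielding the claimed bound. The main obstacle is the explicit construction of $\phase$: one must produce a single-valued phase matching across cell boundaries modulo $2\pi$, with the correct winding around each vortex line, and with error estimates uniform in $\omega\in[0,\omega_0]$ for any fixed $\omega_0<\infty$ (so that cases (i) and (ii) of Theorem \ref{TF theorem 1} can be handled by a single argument, and in particular the construction must degenerate to the vortex-free ansatz as $\omega\to 0$). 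The $z$-translational symmetry of $\magnp$ reduces this to an essentially 2D problem, and the machinery of \cite{CRY06} should carry over with mostly mechanical modifications.
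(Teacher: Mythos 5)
Your proposal is correct and follows essentially the same route as the paper: the lower bound is obtained by simply dropping the positive kinetic term, and the upper bound uses a regularized $\sqrt{\ttfm}$ multiplied by a lattice-of-vortices phase with one flux quantum per cell of side $\sim\sqrt{\varepsilon/\omega}$, cut off near the vortex cores, with the key phase estimate imported from Theorem 3.1 of \cite{CRY06}. The one notable difference is that the paper takes the phase to be the globally defined product $\prod_j(\zeta-\zeta_j)/|\zeta-\zeta_j|$, which is single-valued by construction and thus dissolves the cell-matching issue you flag as the main remaining obstacle.
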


\begin{proof}
    The proof is obtained by a comparison between suitable lower and upper bounds for $ \tgpe
    $, following closely the proof of Theorem 2.1 in \cite{CRY06}.
    \newline
    The two  regimes of slow (i) and rapid  (ii) rotations can be treated together. Actually, the lower and upper bounds in the case (ii) are sufficient to get the result in the case (i) because the estimates are uniform on bounded intervals of $ \omega$ and the TF ground state energy is a continuous function of $ \omega $.
    To simplify our notation, we denote by $C_\omega$ any
    constant independent of $\varepsilon$. In particular,
    $C_\omega$ needs not be the same from one equation to another, but $ \const $ is always uniformly bounded for $ \omega $ bounded.

    By simply neglecting the positive contribution of the kinetic energy in \eqref{GP functional rescaled 1} one obtains the lower bound
        \beq
        \label{lboundGPtilde}
            \varepsilon^2 \tgpf[\phi] \geq \ttffo[|\phi|^2] \geq \ttfe.
        \eeq
    
        For an upper bound we test the functional  \eqref{GP functional rescaled 1} with a trial function of the form
        \beq
        \label{trial function 1}
            \trial(\xv) = c_{\varepsilon} \sqrt{\modu(\vec x)} \: \cut(\rv) \phase(\rv),
        \eeq
        where $ \phase $ is a phase factor, $ \cut(\rv)$ a  function that vanishes at the singularities of $ \phase $ and $ \modu $ a suitable regularization of $ \ttfm $. Note that both $ \phase $ and  $ \cut $ depend only on the 2d coordinate $ \rv $. For simplicity of notation we have suppressed the dependence on $\omega$ that is regarded as fixed.

        The regularization of $ \ttfm $ is analogous to the one used in \cite{Lieb2}, Lemma 2.3 and is explicitly given by $ \modu \equiv j_\varepsilon\star \ttfm $, with
        \beq
        \label{cut modulus}
            j_\varepsilon(\vec x) \equiv \frac{1}{4\pi \varepsilon^3} \exp \left\{ -\frac{|\vec x|}{\varepsilon} \right\}.
        \eeq
        Since $ \| j_{\varepsilon} \|_1 = 1 $, $ \sqrt{\modu} $ is $L^2-$normalized. It is also clear that $ \modu $ converges uniformly to $ \ttfm $ as $ \varepsilon \to 0 $ and it is uniformly bounded in $ \varepsilon $, i.e., there exists a constant $ \const $ such that $ \modu \leq \const $. Furthermore, although $ \modu $ is not compactly supported, it is exponentially small in $ \varepsilon $ for $ \xv $ sufficiently far from the support of $ \ttfm $. More precisely, denoting\footnote{By \eqref{TF density} the support of $ \ttfm $ is a compact set, uniformly bounded in $\omega$, if  $ \omega $ remains bounded.}
    \beq
    \label{external radius}
        R_{\omega} \equiv \sup \{|\vec x| : \rho^{\rm TF}_{1,\omega}(\vec x)>0\} < \infty,
    \eeq
one has for any $ \xv \in \RT $, $ |\vec x| > R_{\omega} $,
    \beq
    \label{exponential smallness}
        \modu(\xv) = \frac{1}{4\pi \varepsilon^3} \int_{\ttfsupp} \diff \xv^{\prime} \: \exp \left\{ -\frac{\left| \xv - \xv^{\prime} \right|}{\varepsilon} \right\} \ttfm(\xv^{\prime}) \leq \frac{1}{4\pi \varepsilon^3} \exp \left\{ -\frac{|\vec x|-R_{\omega}}{\varepsilon} \right\}.
    \eeq
    We also observe that the gradient of $ \modu $ can be bounded in two ways. By using the fact that $ |\nabla j_{\varepsilon}| = \varepsilon^{-1} j_{\varepsilon} $, one can easily prove that
    \beq
        \label{cut property 1}
        \left| \nabla \modu \right|  \leq \varepsilon^{-1} \left| \modu \right|
    \eeq
    whereas, by exploiting the regularity of $ \ttfm $, i.e., $ \left\| \nabla \ttfm \right\|_1 \leq \const $, one has
    \beq
        \label{cut property 2}
        \int_{\RT} \diff \xv \: \left| \nabla \modu \right| \leq \int_{\RT} \diff \xv \int_{\RT} \diff \xv^{\prime} \: \left| \nabla \ttfm(\xv-\xv^{\prime}) \right| j_{\varepsilon}(\vec x^{\prime}) =  \left\| \nabla \ttfm \right\|_1 \leq \const.
    \eeq
    Writing points $\vec r=(x,y)\in \mathbb R^2$ as complex numbers \( \zeta \equiv x+iy \), the phase $ \phase $ is defined by
        \beq
        \label{phase}
            \phase(\zeta) \equiv \prod_{\zeta_j \in \latt} \frac{\zeta - \zeta_j}{|\zeta- \zeta_j|},
        \eeq
        where \( \latt \) is a square lattice of spacing \( \spac \)
        defined in the following way:
        \beq
        \label{lattice}
            \latt \equiv \left\{ \rv_j = (m \spac, n \spac), \: \: m,n \in \mathbb{Z} \: \Big| \: r < 2R_{\omega}-\spac \right\},
        \eeq
    with $ R_{\omega} $ defined by  \eqref{external radius}.
        We also assume that the spacing is of order
        \( \sqrt{\varepsilon} \), i.e., \( \spac = \delta \sqrt{\varepsilon} \), for some \( \delta > 0 \)
        independent of \( \varepsilon \). Note that the phase \( \phase \) carries
        lines of vortices of degree 1 passing through  the lattice points
        and it coincides with the trial function chosen in
        \cite{CRY06}.
        Moreover the number, $ N_{\varepsilon} $, of lines of vortices included in the support of $ \trial $
        is bounded by $ \const/\varepsilon $, because this support is contained in the finite region $ x \leq R_{\omega} $.     \newline
    The  function $ \cut $ is given by
    \beq
            \cut(\rv) \equiv
            \left\{
            \begin{array}{ll}
                    1   &   \mbox{if} \:\:\:\: |\rv - \rv_j| \geq \varepsilon^{\eta} \:\:\:\: \mbox{for all } \vec r_j\in {\mathcal L}, \\
                    \mbox{} &   \mbox{} \\
                    \displaystyle{\frac{|\rv - \rv_j|}{\varepsilon^{\eta}}}   &   \mbox{if} \:\:\:\: |\rv - \rv_j| \leq \varepsilon^{\eta},
                \end{array}
            \right.
        \eeq
        for some\footnote{This requirement on $ \eta $ is needed in order to apply Theorem 3.1 in \cite{CRY06} (see below in the proof).} \( \eta > 5/2 \).
        \newline
        Finally the constant \( c_{\varepsilon} \) is fixed by the normalization condition, $ \| \trial \|_2 = 1 $, and, since $ \cut \leq 1 $ and $ N_{\varepsilon } \leq C/\varepsilon $,
        \beq
        \label{constant}
            1 \leq c^2_{\varepsilon} \leq  1 + O(\varepsilon^{2\eta-1}) = 1 + o(\varepsilon^4).
        \eeq
        The rescaled functional $ \tgpf$ evaluated on the trial function \eqref{trial function 1} is given by
        \beq
        \label{GPevaluation1}
            \tgpf[\trial] = c_{\varepsilon}^2 \int_{\RT} \diff\xv \: \left| \nabla \left( \cut \sqrt{\modu} \right) \right|^2 + c_{\varepsilon}^2 \int_{\RT} d\xv \: \modu \cut^2 \: \left| \left( \nabla - i \vec{A}_{\omega/\varepsilon} \right) \phase \right|^2 + \frac{\ttffo [ |\trial|^2]}{\varepsilon^2}.
        \eeq
        We  estimate the three energy contributions separately.
    Using that $(a+b)^2\leq 2a^2+2b^2$ the first term can be bounded by\footnote{In this proof $ \mathcal{B}_R $, $ R > 0 $, will always denote a two-dimensional disc of radius $ R $, centered at the origin.}
    \bdm
        c^2_{\varepsilon} \int_{\RT} \diff\xv \: \left| \nabla \left( \cut \sqrt{\modu} \right) \right|^2 \leq 2 c_{\varepsilon}^2 \int_{\RT} \diff\xv \: \left| \nabla \sqrt{\modu} \right|^2 + c_{\varepsilon}^2 \const \int_{\mathcal{B}_{R_{\omega}}} \diff\rv \:  \left| \nabla \cut \right|^2,
    \edm
    where we have used the trivial bound
    \bdm
        \int_{\R} \diff z \: \modu(\xv) \leq \const.
    \edm
    Denoting by $ \bi $ a two-dimensional disc of radius \( \varepsilon^{\eta} \) centered at \( \rv_j \in {\mathcal L}\cap  \mathcal{B}_{R_{\omega}} \), one has
    \beq
        \int_{\mathcal{B}_{R_{\omega}}} \diff\rv \:  \left| \nabla \cut \right|^2 \leq \frac{\left| \bigcup_{j \in \latt} \bi \right| }{\varepsilon^{2\eta}} \leq N_{\varepsilon} \leq \frac{\const}{\varepsilon},
        \eeq
    while, by using both \eqref{cut property 1} and \eqref{cut property 2}, we get
    \beq
        \int_{\RT} \diff\xv \: \left| \nabla \sqrt{\modu} \right|^2 = \int_{\RT} \diff\xv \: \frac{\left| \nabla \modu \right|^2}{4 \modu} \leq \frac{1}{4\varepsilon} \int_{\RT} \diff\xv \: \left| \nabla \modu \right| \leq \frac{\const}{\varepsilon}.
    \eeq
    Hence \eqref{constant} implies the bound
    \beq
    \label{uestimate 1}
        c_{\varepsilon}^2 \int_{\RT} \diff\xv \: \left| \nabla \left( \cut \sqrt{\modu} \right) \right|^2 \leq \frac{\const}{\varepsilon}.
    \eeq
    In order to estimate the second term in \eqref{GPevaluation1}, we first need to restrict the integration to a suitable two-dimensional compact set, by exploiting the exponential smallness of $ \modu $ given by \eqref{exponential smallness}:
    	\bdm
        	\int_{\RT} \diff\xv \: \modu \cut^2 \: \left| \left( \nabla - i \vec{A}_{\omega/\varepsilon} \right) \phase \right|^2 = \int_{|\xv| \leq 2R_{\omega}} \diff\xv \: \modu \cut^2 \: \left| \left( \nabla - i \vec{A}_{\omega/\varepsilon} \right) \phase \right|^2 + 
	\edm
	\bdm
		+ \int_{|\xv| \geq 2R_{\omega}} \diff\xv \: \modu \cut^2 \: \left| \left( \nabla - i \vec{A}_{\omega/\varepsilon} \right) \phase \right|^2 \leq 
	\edm
	\bdm
		\leq \const \int_{\mathcal{B}_{2R_{\omega}}} \diff\rv \: \cut^2 \: \left| \left( \nabla - i \vec{A}_{\omega/\varepsilon} \right) \phase \right|^2 + \frac{\const}{\varepsilon^6} \int_{x \geq 2R_{\omega}} \diff\xv \: \exp \left\{ - \frac{|\vec x| - R_{\omega}}{\varepsilon} \right\} \leq
    \edm
    \bdm
        \leq    \const \int_{\Lambda} \diff\rv \: \left| \left( \nabla - i \vec{A}_{\omega/\varepsilon} \right) \phase \right|^2 + \const \int_{\cup_{i \in \latt} \bi} \diff\rv \: \cut^2 \left| \left( \nabla - i \vec{A}_{\omega/\varepsilon} \right) \phase
        \right|^2 + \frac{\const }{\varepsilon^9} \exp \left\{ - \frac{R_{\omega}}{\varepsilon} \right\},
    \edm
        where 
        \bdm
            \Lambda \equiv \mathcal{B}_{2R_{\omega}} \backslash \bigcup_{j \in \latt} \bi,
        \edm
        and we have used the pointwise estimate (see also (3.10) in \cite{CRY06}),
    	\bdm
        	\left| \left( \nabla - i \vec{A}_{\omega/\varepsilon} \right) \phase \right| \leq \left| \nabla \phase \right| + \frac{\const}{\varepsilon} \leq \frac{\const}{\varepsilon^{3/2}}.
    	\edm
  	which holds for any $ \xv \in \RT $ such that $ |\xv| \geq 2R_{\omega} $.
        
        The second term of the right hand side of the above expression can be bounded in the following way (see also the proof of Theorem 2.1 in \cite{CRY06}):
        \bdm
            \int_{\cup_{i \in \latt} \bi} \diff\rv \: \cut^2 \left| \left( \nabla - i \vec{A}_{\omega/\varepsilon} \right) \phase \right|^2 \leq 2 \int_{\cup_{i \in \latt} \bi} \diff\rv \: \cut^2 \left| \nabla \phase \right|^2 + 2 \int_{\cup_{i \in \latt} \bi} \diff\rv \: \left| \magnp \right|^2 \leq \frac{\const}{\varepsilon} + \frac{\const}{\varepsilon^{4-2\eta}} + \frac{\const}{\varepsilon^{3-4\eta}} \leq \frac{\const}{\varepsilon}.
    \edm
        On the other hand we can apply\footnote{By scaling the expression can be reduced to an integral over a set contained in a ball of radius $ 1 $. The only change with respect to Theorem 3.1 in \cite{CRY06} is then the multiplying factor $ R_{\omega}^2 $.} Theorem 3.1 in \cite{CRY06} to get
        \bdm
            \int_{\Lambda} \diff\rv \: \: \left| \left( \nabla - i \vec{A}_{\omega/\varepsilon} \right) \phase \right|^2 \leq \frac{\pi R_{\omega}^2}{\varepsilon^2}\left(\frac{\omega}{2} - \frac{\pi}{\delta^2} \right)^2 + \frac{\const |\log\varepsilon|}{\varepsilon}
        \edm
        and, by choosing \( \delta = \sqrt{2\pi/\omega} \),
        \beq
        \label{uestimate 2}
            c^2_{\varepsilon} \int_{\RT} \diff\xv \: \modu \cut^2 \: \left| \left( \nabla - i \vec{A}_{\omega/\varepsilon} \right) \phase \right|^2 \leq \frac{\const |\log\varepsilon|}{\varepsilon} + \frac{\const}{\varepsilon} + \frac{\const }{\varepsilon^9} \exp \left\{ - \frac{R_{\omega}}{\varepsilon} \right\} \leq \frac{\const|\log\varepsilon|}{\varepsilon}.
    \eeq
        It remains then to estimate the last term in \eqref{GPevaluation1}, namely the TF energy of $ |\trial|^2 $.  We have
    \bdm
        \ttffo \left[ |\trial|^2 \right] \leq \ttffo \left[ \modu \right] + o(\varepsilon)
    \edm
    and, defining $ \Wo(\xv) \equiv V(\xv) - \omega^2 r^2/4 $,
        \bdm
            \ttffo \left[ \modu \right] - \ttffo \left[ \ttfm \right] = \int_{\RT} \diff \xv \:\: \Wo \left( j_{\varepsilon} \star \ttfm - \ttfm \right) = \int_{\ttfsupp} \diff \xv \:  \ttfm \left( j_{\varepsilon} \star \Wo - \Wo \right).
        \edm
Differentiabilty of $ V $ implies
    \bdm
        \left| \left( j_{\varepsilon} \star \Wo \right)(\xv) - \Wo(\xv) \right| \leq \frac{1}{4\pi} \int_{\RT} \diff \xv^{\prime} \: \left| \Wo(\xv - \varepsilon \xv^{\prime}) - \Wo(\xv) \right| e^{-|\vec x^{\prime}|} \leq \const \varepsilon \int_{\RT} \diff \xv^{\prime} \: r^{\prime} e^{-|\vec x^{\prime}|} \leq \const \varepsilon,
    \edm
    so that
    \beq
        \label{uestimate 3}
            \ttffo \left[ |\trial|^2 \right] \leq \ttffo \left[ \ttfm \right] + \const \varepsilon = \ttfe + \const \varepsilon.
        \eeq
        Putting all the three estimates \eqref{uestimate 1}, \eqref{uestimate 2} and \eqref{uestimate 3} together we obtain the upper bound
    \beq
    \label{ubound}
        \tgpe \leq \frac{\ttfe}{\varepsilon^2} + \frac{\const| \log\varepsilon|}{\varepsilon}.
    \eeq
    The upper bound and the lower bound together give the desired result in the rapid rotation regime (ii).  Indeed, writing $ \varepsilon^2 $ as $ g^{-{(s-2)}/{(s+3)}} $ and using (\ref{GPte}), we get, as $ g \to {}\infty $,
    \bdm
        \ttfe \leq g^{-s/(s+3)} \gpe \leq \ttfe + \const g^{-{(s+2)}/{(2s+6)}} \log g.
    \edm
    The same estimates prove the convergence in the slow rotation regime (i), by uniformity of the bounds and continuity of $ \ttfe $.\end{proof}

As in the two-dimensional case (see \cite{CRY06,CRY07}), a simple
corollary of this result is the $L^1$-convergence of the (scaled) GP
density $ \left| \gpm \right|^2 $ to the TF minimizer $ \ttfm $.
In particular, in the slow rotation regime (i)  this gives the
convergence of the GP density to $ \rho^{{\mathrm{TF}}}_{1,0} $.
In that case the proof of the upper bound could have been
simplified a lot by taking as trial function a suitable
regularization of the (real) TF minimizer without rotation ($
\omega = 0 $).
%%%%%%%%%%%%%%%%%%%%%%%%%%%%%%%%%%%%%%%%
\newline
Also in the rapid rotation regime (ii) other trial functions could be chosen with the same leading order contribution to the energy as \eqref{trial function 1}. One possibility is to cover the support of $\ttfm$ with Dirichlet boxes and choose a phase factor in each box so that the contribution from $\vec A_{\omega/\varepsilon}$ to the kinetic energy is gauged away to leading order. The reason we have chosen a trial function of the form 
\eqref{trial function 1} is that the error term in \eqref{3.10} has the expected dependence
on the parameters of a next to leading order term, albeit with an unspecified constant in front. In fact, based on 
 considerations of special cases (see, e.g., \cite{Afta})
the true GP minimizer is expected to contain, like \eqref{trial function 1},  a large number $ N_{\varepsilon} \sim \omega/\varepsilon $ of vortex lines, each giving a kinetic contribution of order $ |\log\varepsilon| $ beyond the leading order term. We note that the optimal vortex lattice is expected to be triangular rather than rectangular and bending of lattice lines \cite{Afta1} will also occur but such details only affect the constant factor and higher order contributions in the error term in \eqref{3.10}.

\subsection{The Regime $\omega \to {}\infty $}

We now consider the case of ultrarapid rotations. In the proofs of the quantum mechanical limit theorems in Section 4 we shall not make direct use of the GP minimizers  to construct trial functions for an upper bound to the energy, but the asymptotics  of the TF energy and minimizer as $\omega\to \infty$ will be important. In this subsection we first derive these properties. The  limit theorem
for the GP energy for ultrarapid rotations can be proved by a simple extension of the corresponding TF result and is included for completeness.

We start with a lemma on the structure of the set $\mathcal M$ of minimizing points for the sum of the external potential $V$ and the centrifugal potential $-r^2/4$. As always, $V$ is assumed to be homogeneous of order $s>2$.
\begin{lemma}[${\mathcal M}$ is a subset of a cylinder]\label{form of M}
The set $\mathcal M$ of minimizing points for $W(\vec x)=V(\vec x)-r^2/4$ is a compact
subset of a cylindrical surface with a fixed radial coordinate 
\beq\label{r0} r_0=2\sqrt{\frac{s\,m}{s-2}}\eeq
where $m\equiv-\min W\geq 0$. By scaling it follows that all points of the set $\mathcal M_\Omega$ of minimizing points for $W_\Omega(\vec x)\equiv V(\vec x)-\Omega^2r^2/4$ have the same radial coordinate $r_\Omega=\Omega^{2/(s-2)}r_0$.
\end{lemma}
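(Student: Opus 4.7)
The plan is to reduce the three-dimensional minimization defining $\mathcal{M}$ to a one-dimensional calculation along rays through the origin, using the homogeneity of $V$. A preliminary observation is that the stability assumption \eqref{stable}, together with $V(\lambda \vec{x}) = \lambda^s V(\vec{x})$, forces $V \geq 0$ throughout $\mathbb{R}^3$: if $V(\vec{x}) < 0$ for some $\vec{x}$, then $W(\lambda \vec{x}) = \lambda^s V(\vec{x}) - \tfrac{1}{4}\lambda^2 r(\vec{x})^2 \to -\infty$ as $\lambda \to \infty$ (using $s>2$), contradicting \eqref{stable}.

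Next, I would fix $\vec{x}_0 \in \mathcal{M}$ with cylindrical radius $r_0 = |\vec{e}_z \wedge \vec{x}_0|$ and restrict $W$ to the ray $\{\lambda \vec{x}_0 : \lambda \geq 0\}$, obtaining
\begin{equation*}
f(\lambda) \equiv W(\lambda \vec{x}_0) = \lambda^s V(\vec{x}_0) - \tfrac{1}{4}\lambda^2 r_0^2.
\end{equation*}
Since $\vec{x}_0$ is a global minimum of $W$ on $\mathbb{R}^3$, $f$ attains its minimum on $[0,\infty)$ at $\lambda = 1$; whenever $m > 0$ this is an interior critical point (because $f(0) = 0 > -m = f(1)$), so $0 = f'(1) = s V(\vec{x}_0) - r_0^2/2$, i.e.\ $V(\vec{x}_0) = r_0^2/(2s)$. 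Substituting back,
\begin{equation*}
-m = f(1) = \frac{r_0^2}{2s} - \frac{r_0^2}{4} = -\frac{(s-2)\, r_0^2}{4s},
\end{equation*}
which yields $r_0 = 2\sqrt{sm/(s-2)}$, independently of the choice of $\vec{x}_0 \in \mathcal{M}$. The borderline case $m=0$ forces $f \geq 0$ with $f(1)=0$; since $s > 2$ this requires $V(\vec{x}_0) = 0$ and $r_0 = 0$, still consistent with the formula.

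Compactness of $\mathcal{M}$ follows because \eqref{stable} makes the sublevel set $\{W \leq -m + 1\}$ bounded, and $\mathcal{M}$ is its closed subset on which $W$ attains its minimum. For the $\mathcal{M}_\Omega$ statement I would simply scale: setting $\vec{x} = \Omega^{2/(s-2)} \vec{y}$, homogeneity of $V$ gives $W_\Omega(\vec{x}) = \Omega^{2s/(s-2)} W(\vec{y})$, so $\mathcal{M}_\Omega = \Omega^{2/(s-2)} \mathcal{M}$ and the common radial coordinate becomes $\Omega^{2/(s-2)} r_0$. I anticipate no serious obstacle; the only delicate point is ruling out $\lambda = 1$ being a boundary minimum of $f$, which the comparison $f(0)$ versus $f(1)$ dispatches immediately.
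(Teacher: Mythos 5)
Your proof is correct and follows essentially the same route as the paper: restricting $W$ to the ray $\lambda\mapsto\lambda\vec{x}_0$ and setting $f'(1)=0$ is precisely Euler's relation $\vec{x}\cdot\nabla V=sV$ combined with the first-order condition $\nabla W=0$ at the minimum, which is how the paper derives $r_0^2=4sm/(s-2)$. Your extra care about the endpoint $\lambda=0$ and the case $m=0$ is harmless but not needed, since $\lambda=1$ is always interior to $[0,\infty)$.
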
\label{structureofM}
\begin{proof} That $\mathcal M$ is compact follows from continuity of $V$ and the assumption that $W$ tends to $\infty$ as $|\vec x|\to\infty$. On $\mathcal{M}$ we have $\nabla W=0$ and $W=-m$, and thus $\partial
_{r}V=r/2$, $\partial _{z}V=0$ and $V=r^2/4-m$. On the other hand, since $V$ is homogeneous of order $s$, Euler's relation gives $r\partial
_{r}V+z\partial _{z}V=sV$ and hence \eqref{r0}. \end{proof}

If $V$ is rotationally symmetric, i.e., $V(\vec x)=V(r,z)$, and strictly monotonously increasing in $|z|$ (examples: $V(\vec x)=a|\vec x|^s$, or $V(\vec x)=ar^s+b|z|^s$), then $\mathcal M$ is clearly a circle in the $z=0$ plane. It is, however also possible that $\mathcal M$ consists of a two-dimensional subset of the cylinder $r=r_0$ (example: $V(\vec x)=r^s f(|z|/r)$ with $f=1$ on some interval but $>1$ and increasing outside the interval), or of discrete points (example: $V(\vec x)=a|x|^s+b|y|^s+c|z|^s$ with $a\neq b$).

Next we consider the convergence of the TF energy $E^{\rm TF}_{\gamma,1}$ to 
$E^{\rm TF}_{0,1}=\inf W$ 
as $\gamma=
\omega^{-2(s+3)/(s-2)}\to 0$.
%%%%%%%%%%%%%%%%%%%%%%%%%%%%%%%
\begin{theorem}[TF energy and density for $\omega\to \infty$]\label{lemma asymptotic TF-1}
 \mbox{} \\
For $\gamma \to 0$
\beq\label{gammatozero} E^{\rm TF}_{\gamma,1}=E^{\rm TF}_{0,1}+O(\gamma^{2/5}).\eeq
Moreover, the TF minimizer $\rho^{\rm TF}_{\gamma,1}$ satisfies the bound
\beq  \Vert\rho^{\rm TF}_{\gamma,1}\Vert_\infty\leq 
{\rm const.} \gamma^{-3/5}\eeq
and for any $\epsilon>0$ there is a $\gamma_\epsilon$ such that for $\gamma<\gamma_\epsilon$ 
the support of $\rho^{\rm TF}_{\gamma,1}$ is contained in 
$\mathcal M^{\epsilon}\equiv \{\vec x:\ |\vec x-\vec x'|\leq \epsilon\hbox{ for all }\vec x'\in {\mathcal M}\}$.
\end{theorem}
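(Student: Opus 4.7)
The plan is to handle all three assertions by pushing through the explicit formula $\rho^{\rm TF}_{\gamma,1}(\vec x)=(2\gamma)^{-1}[\mu^{\rm TF}_{\gamma,1}-W(\vec x)]_+$ with $W(\vec x)\equiv V(\vec x)-r^2/4$, obtained from (\ref{TF density}) by specializing to $g\to\gamma$, $\Omega\to 1$. The lower bound $E^{\rm TF}_{\gamma,1}\geq -m=E^{\rm TF}_{0,1}$ is immediate, since the interaction term $\gamma\int\rho^2$ is nonnegative and $\int W\rho\geq -m\int\rho=-m$.

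For the matching upper bound I would test the functional on the trial density $\rho_\delta\equiv |S_\delta|^{-1}\chi_{S_\delta}$ supported on the sub-level set $S_\delta\equiv\{\vec x:W(\vec x)\leq -m+\delta\}$, giving $\mathcal E^{\rm TF}_{\gamma,1}[\rho_\delta]\leq -m+\delta+\gamma/|S_\delta|$. The crucial input is the volume lower bound $|S_\delta|\geq c\delta^{3/2}$ for small $\delta$. By Lemma \ref{form of M}, $\mathcal M$ is a non-empty, compact subset of the cylinder $r=r_0>0$; pick any $\vec x_0\in\mathcal M$. Since $\nabla W(\vec x_0)=0$, $W(\vec x_0)=-m$, and $W$ is $C^2$ (inherited from $V$), Taylor's theorem yields $W(\vec x)\leq -m+C_0|\vec x-\vec x_0|^2$ in a fixed neighborhood of $\vec x_0$, so $S_\delta$ contains a Euclidean ball of radius $\sqrt{\delta/C_0}$. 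Optimizing with $\delta\sim\gamma^{2/5}$ yields $E^{\rm TF}_{\gamma,1}\leq -m+O(\gamma^{2/5})$, which combined with the lower bound proves (\ref{gammatozero}).

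The two density statements follow from the same volume estimate applied to the normalization $\int\rho^{\rm TF}_{\gamma,1}=1$. Setting $\delta_\mu\equiv\mu^{\rm TF}_{\gamma,1}+m$, the identity $2\gamma=\int[\mu^{\rm TF}_{\gamma,1}-W]_+\,d\vec x$ restricted to $S_{\delta_\mu/2}$, where the integrand is at least $\delta_\mu/2$, and combined with $|S_{\delta_\mu/2}|\geq c(\delta_\mu/2)^{3/2}$, yields $2\gamma\geq c'\delta_\mu^{5/2}$, hence $\delta_\mu\leq C\gamma^{2/5}$ and $\|\rho^{\rm TF}_{\gamma,1}\|_\infty=\delta_\mu/(2\gamma)\leq C'\gamma^{-3/5}$. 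The support of $\rho^{\rm TF}_{\gamma,1}$ equals $S_{\delta_\mu}\subset S_{C\gamma^{2/5}}$, and a compactness argument shows $S_\delta\subset\mathcal M^\epsilon$ for all sufficiently small $\delta$: otherwise one could extract from a sequence $\vec x_n\in S_{\delta_n}\setminus\mathcal M^\epsilon$ with $\delta_n\to 0$, using compactness of $\{W\leq -m+1\}$ (a consequence of $W\to\infty$ at infinity), a limit $\vec x_*$ with $W(\vec x_*)=-m$, contradicting $\vec x_*\in\mathcal M$ versus $\vec x_n\notin\mathcal M^\epsilon$.

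The main obstacle is establishing that the $\delta^{3/2}$ volume bound is robust across the admissible geometries of $\mathcal M$, which Lemma \ref{form of M} allows to be isolated points, a curve, or a two-dimensional subset of the cylinder. Anchoring the Taylor expansion at a single point $\vec x_0\in\mathcal M$ always supplies this worst-case cubic growth, and it is precisely the exponent $3/2$ that produces the $\gamma^{2/5}$ rate in the energy and the matching $\gamma^{-3/5}$ blow-up of the $L^\infty$ norm.
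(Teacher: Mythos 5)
Your proposal is correct and reaches all three conclusions, but it organizes the work differently from the paper. For the energy upper bound the paper tests with a scaled bump $\rho_\delta=\delta^{-3}h((\vec x-\vec x_0)/\delta)$ supported in a ball of radius $\delta$ around $\vec x_0\in\mathcal M$, getting a potential error $O(\delta^2)$ from the Taylor expansion and an interaction term $\gamma\delta^{-3}\Vert h\Vert_2^2$, optimized at $\delta=\gamma^{1/5}$; you test with the normalized indicator of the sublevel set $S_\delta=\{W\leq -m+\delta\}$, getting a potential error $\delta$ and an interaction term $\gamma/|S_\delta|$ controlled by the volume bound $|S_\delta|\geq c\,\delta^{3/2}$ (which uses the same Taylor input), optimized at $\delta=\gamma^{2/5}$. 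These are two dressings of the same idea and give the same rate. The genuine divergence is in the $L^\infty$ bound: the paper deduces $\mu^{\rm TF}_{\gamma,1}-E^{\rm TF}_{0,1}=O(\gamma^{2/5})$ from the already-proved energy asymptotics combined with the chemical-potential identity \eqref{TFchempot} and the inequality $E^{\rm TF}_{\gamma,1}\geq E^{\rm TF}_{0,1}+\gamma\Vert\rho^{\rm TF}_{\gamma,1}\Vert_2^2$, whereas you extract $\delta_\mu\equiv\mu^{\rm TF}_{\gamma,1}+m\leq C\gamma^{2/5}$ directly from the normalization identity $2\gamma=\int[\mu^{\rm TF}_{\gamma,1}-W]_+$ and the sublevel-set volume bound, with no reference to the energy asymptotics at all. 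Your route is more self-contained and also makes the support statement quantitative (the support lies in $S_{C\gamma^{2/5}}$), which the paper only obtains qualitatively via convergence of $\mu^{\rm TF}_{\gamma,1}$. One small point to tidy: the bound $|S_\delta|\geq c\,\delta^{3/2}$ is established only for $\delta$ below a fixed threshold $\delta_0$ (so that the ball $B(\vec x_0,\sqrt{\delta/C_0})$ stays inside the neighborhood where the Taylor estimate holds), so when you invoke it with $\delta=\delta_\mu/2$ you should either first rule out $\delta_\mu\geq\delta_0$ (e.g.\ by monotonicity, $|S_{\delta_\mu/2}|\geq|S_{\delta_0/2}|>0$ forces $\delta_\mu\leq C\gamma$, a contradiction for small $\gamma$) or simply use $|S_\delta|\geq\min\{|S_{\delta_0}|,c\,\delta^{3/2}\}$; this is a trivial patch and does not affect the result.
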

\begin{proof}
It is clear that $E^{\rm TF}_{\gamma,1}\geq E^{\rm TF}_{0,1}$. For an upper bound we provide a trial function with energy at most $E^{\rm TF}_{0,1}+O(\gamma^{2/5})$.  Let $h$ be any continuous, nonnegative function with support in the unit ball in $\RT$  with $\int h=1$.  For $\delta>0$ and a point $\vec x_0\in \mathcal M$ define $\rho_\delta(\vec x)=\delta^{-3}h((\vec x-\vec x_0)/\delta)$.
Then, using that $W$ is $C^2$ and that $\rho_\delta$ is supported in a ball of radius $\delta$ around $\vec x_0\in{\mathcal M}$, we have
\begin{equation}
E_{\gamma,1}^{\mathrm{TF}}\leq{\mathcal  E}_{\gamma,1}^{\mathrm{TF}}[\rho_\delta]=
\int_{\RT}{\rm
d}\xv\left\{W\rho_\delta+\gamma \rho^2_\delta\right\}\leq E^{\rm TF}_{0,1}+C\delta^2+\gamma\delta^{-3}\Vert h\Vert_2^2.
\label{upperbound_TF_energy0}
\end{equation}
Choosing $\delta=\gamma^{1/5}$ now proves \eqref{gammatozero}.

The TF minimizer is explicitly given by
\begin{equation}
\rho _{\gamma,1}^{\mathrm{TF}}\left( \vec{x}\right) =\dfrac{1}{2%
\gamma}\left[\mu^{\rm TF}_{\gamma,1}-W(\vec x)\right] _{+}.
\label{TF asymptotics 5}
\end{equation}
Since $\rho _{\gamma,1}^{\mathrm{TF}}$ remains normalized as $\gamma\to 0$, it is clear by continuity of $W$ that
$\mu^{\rm TF}_{\gamma,1}$ must converge to the minimum of $W$ and the support of $\rho _{\gamma,1}^{\mathrm{TF}}$ shrinks to the set $\mathcal M$ as stated in the lemma. Moreover, 
$ E_{{%
\gamma},1}^{\mathrm{TF}}\geq E_{0,1}^{\mathrm{TF}}+\gamma||\rho _{\gamma,1}^{\mathrm{TF}}||_{2}^{2}
$
and, by (\ref{TFchempot}),
$
\mu _{g,\Omega }^{\mathrm{TF}%
}= E_{g,\Omega }^{\mathrm{TF}}+g||\rho _{g,\Omega }^{\mathrm{TF}%
}||_{2}^{2} 
$
and this together with   \eqref{gammatozero} implies
\beq\label{334}
0\leq \left[\mu^{\rm TF}_{\gamma,1}-W(\vec x)\right] _{+}\leq \mu^{\rm TF}_{\gamma,1}-E_{0,1}^{\mathrm{TF}}\leq O(\gamma^{2/5})
\eeq
and hence $\Vert\rho^{\rm TF}_{\gamma,1}\Vert_\infty\leq 
{\rm const.} \gamma^{-3/5}$.
\end{proof}
\emph{Remark:\/} In the case that  $\mathcal M$ does not consist of discrete points but is one- or two-dimensional the power of $\gamma$ in the optimal error terms are of higher order than $\gamma^{2/5}$. For instance if $V$ is radially symmetric and
$\mathcal M$ is a circle, the trial function can be taken to be radially symmetric and the error is $O(\gamma^{1/2})$.

As a complementary statement to Theorem \ref{EnAsOmegaconst}
we now prove the convergence of the scaled GP ground state energy
to $\tfeinf $ as $\Omega$ and $\omega\to \infty$. 
\begin{theorem}[GP energy asymptotics for $\omega\to\infty$]
\label{EnAsOmegainfty} \mbox{} \newline As $\Omega\to \infty$ and $\omega\to\infty$
 \beq\label{GP ultrastrong}
\Omega ^{-2s/(s-2)}\gpe =
\tfeinf+{{O}}\left( \Omega ^{\prime -1}+\gamma^{2/5}\right) , \eeq with $\Omega ^{\prime }=
\Omega^{{(s+2)}/{( s-2) }}$ and $\gamma= \omega
^{{-2(s+3)}/{(s-2) }}.$
\end{theorem}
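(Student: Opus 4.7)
The strategy is to establish matching lower and upper bounds on the rescaled quantity $\Omega'^{-2}\togpe$, which coincides with $\Omega^{-2s/(s-2)}\gpe$ because $\gpe = \lambda^{-2}\togpe$ with $\lambda = \Omega^{2/(s-2)}$ and $\lambda^{2}\Omega'^{2} = \Omega^{2s/(s-2)}$. After dividing by $\Omega'^{2}$, the rescaled GP functional \eqref{GP functional rescaled 2} takes the form
\begin{equation*}
\Omega'^{-2}\togpf[\phi] = \int_{\RT}\diff\xv\left\{\Omega'^{-2}\bigl|\bigl[\nabla - i\vec{A}_{\Omega'}\bigr]\phi\bigr|^{2} + W(\xv)|\phi|^{2} + \gamma|\phi|^{4}\right\},
\end{equation*}
with $W(\xv)=V(\xv)-r^{2}/4$, and the $\gamma\to 0$ limit of the integral in braces is, by Theorem \ref{lemma asymptotic TF-1}, precisely $\tfeinf = \inf W$.

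For the lower bound, the diamagnetic inequality $\bigl|[\nabla - i\vec{A}_{\Omega'}]\phi\bigr|^{2}\geq|\nabla|\phi||^{2}\geq 0$ lets us drop the magnetic kinetic contribution, giving
\begin{equation*}
\Omega'^{-2}\togpf[\phi]\geq \mathcal{E}^{\mathrm{TF}}_{\gamma,1}\bigl[|\phi|^{2}\bigr]\geq E^{\mathrm{TF}}_{\gamma,1}=\tfeinf + O(\gamma^{2/5}),
\end{equation*}
where the last equality uses Theorem \ref{lemma asymptotic TF-1}. For the upper bound I would employ the trial function $\phi(\xv)=e^{iS(\xv)}\psi_{\delta}(\xv)$, where $\psi_{\delta}(\xv)=\delta^{-3/2}\eta((\xv-\xv_{0})/\delta)$ is a smooth, real, $L^{2}$-normalized bump of width $\delta$ centered at a fixed minimizer $\xv_{0}\in\mathcal M$ of $W$, and $S(\xv)=\vec{A}_{\Omega'}(\xv_{0})\cdot\xv$ is a linear phase chosen to cancel the vector potential at $\xv_{0}$. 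Since $\vec{A}_{\Omega'}$ is linear in $\xv$, one has $|\nabla S - \vec{A}_{\Omega'}(\xv)|\leq \tfrac{\Omega'}{2}|\xv-\xv_{0}|\leq\tfrac{\Omega'\delta}{2}$ on $\mathrm{supp}(\psi_{\delta})$, so the cross term vanishes by reality of $\nabla\psi_{\delta}$ and
\begin{equation*}
\bigl|[\nabla - i\vec{A}_{\Omega'}]\phi\bigr|^{2} = |\nabla\psi_{\delta}|^{2} + |\nabla S - \vec{A}_{\Omega'}|^{2}\psi_{\delta}^{2}\leq |\nabla\psi_{\delta}|^{2} + \tfrac{1}{4}\Omega'^{2}\delta^{2}\psi_{\delta}^{2}.
\end{equation*}

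Routine scaling gives $\int|\nabla\psi_{\delta}|^{2}=O(\delta^{-2})$ and $\int\psi_{\delta}^{4}=O(\delta^{-3})$, while $\nabla W(\xv_{0})=0$ together with $W\in C^{2}$ yields $\int W\psi_{\delta}^{2}\leq \tfeinf + O(\delta^{2})$ by Taylor expansion, mirroring the trial-function estimate \eqref{upperbound_TF_energy0}. Combining these inputs gives
\begin{equation*}
\Omega'^{-2}\togpf[\phi]\leq \tfeinf + C\bigl(\Omega'^{-2}\delta^{-2} + \delta^{2} + \gamma\delta^{-3}\bigr),
\end{equation*}
and choosing $\delta=\max(\gamma^{1/5},\,\Omega'^{-1/2})$ bounds each of the three error terms by $O(\gamma^{2/5}+\Omega'^{-1})$, matching the lower bound.

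No step is expected to present real difficulty. The only mild subtlety is that $\vec{A}_{\Omega'}$ is not a gradient, so the linear phase $S$ only cancels it to first order around $\xv_{0}$; this leaves a residue of size $\Omega'\delta$ in the magnetic kinetic density and is precisely what produces the $\Omega'^{-1}$ contribution to the final error. The rest is essentially the TF upper-bound construction of Theorem \ref{lemma asymptotic TF-1} augmented by this gauge phase.
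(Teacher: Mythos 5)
Your proof is correct and follows essentially the same route as the paper: the lower bound drops the (positive) magnetic kinetic term, and the upper bound uses exactly the trial function $\sqrt{h_\delta(\vec x-\vec x_0)}e^{i\vec A_{\Omega'}(\vec x_0)\cdot\vec x}$ centered at a minimizer of $W$, yielding the same error budget $O(\Omega'^{-2}\delta^{-2}+\delta^2+\gamma\delta^{-3})$ and the same optimization over $\delta$. The only cosmetic differences are that you exploit the exact vanishing of the cross term (the paper instead uses $2a^2+2b^2$, costing an irrelevant factor of 2) and you quote $E^{\mathrm{TF}}_{\gamma,1}\geq\tfeinf$ via Theorem \ref{lemma asymptotic TF-1} where the paper uses it directly.
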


\begin{proof} Note first that $\Omega ^{-2s/(s-2)}\gpe=\Omega^{\prime -2}\togpe$ where $\togpe$ is the ground state energy of the scaled GP functional $\togpf$, c.f.\ \eqref{GP functional rescaled 2}. Dropping the positive kinetic term as in the proof of Theorem
\ref {EnAsOmegaconst} immediately gives the lower bound 
\beq\label{GP_to_TF_Strong} \Omega^{\prime -2}\togpe \geq \tfeinf
. \eeq
For an upper bound we make use of a nonnegative function $h$ with support in the unit ball and $\int h=1$ as in the proof of Theorem
\ref{lemma asymptotic TF-1}, this time requiring $h$ to be  $C^\infty$  so that 
$\Vert \nabla
\sqrt{h}\Vert _{2}<{}\infty $. For $\delta>0$ we put 
$h_\delta(\vec x)=\delta^{-3}h(\vec x/\delta)$.
Let $\vec{x}_{0}\in \mathcal{M}$  and define 
\beq\label{trialfunctionultra}\phi(\vec x)=\sqrt{h_\delta(\vec x-\vec x_0)}\exp\{i\Omega \vec x\cdot(\vec e_z\wedge\vec x_0)/2\}.\eeq
Testing $\togpf$ with this function we obtain
\begin{equation}\Omega ^{\prime -2}\togpe \leq 2\Omega ^{\prime -2}\Vert \nabla \sqrt{%
h_\delta}\Vert _{2}^{2}+\gamma\Vert h_{\delta}\Vert
_{2}^{2}+\int_{\RT}\diff\xv \left( \hbox{$\frac{1}{4}$}\left| \vec{e}_{z}\wedge (\vec{x}_{0}-%
\vec{x})\right| ^{2}+W(\vec x)\right)
h_\delta\left( \vec{x}-\vec{x}_{0}\right) . \label{upper bound
omega infinity}
\end{equation}
Since $h_\delta(\vec{x})\equiv \delta ^{-3}h(\vec{x}/\delta ),$
the first term is ${{O}}\left( \Omega ^{\prime -2}\delta
^{-2}\right) $. Since $\Vert h_\delta\Vert _{2}^{2}\leq \Vert
h_\delta\Vert _{\infty }={{O}}\left( \delta ^{-3}\right)
$, the second term  is ${{O}}\left(
\gamma\delta ^{-3}\right)$. In the last integral we use that $W\in
C^{2}$, that
$\mathrm{supp}\,h_\delta$ is contained in a ball of radius $\delta$ around $\vec x_0$ with $W(\vec x_0)=\tfeinf$, and  $||h_\delta||_{1}=1$ to get 
\beq\label{error estimate}
\int_{\RT}\diff\xv \left( \hbox{$\frac{1}{4}$}\left| \vec{e}_{z}\wedge (\vec{x}_{0}-\vec{x}%
)\right| ^{2}+W(\vec x)\right)
h_\delta\left( \vec{x}-\vec{x}_{0}\right) \leq
\tfeinf+{{O}}(\delta ^{2}).
\eeq
We thus have
\begin{equation}
\Omega ^{\prime -2}\togpe\leq \tfeinf+{{O}}\left(\Omega ^{\prime -2}\delta^{-2}+ \delta
^{2}+\gamma\delta
^{-3}\right) . \label{error estimate 00}
\end{equation}
Equating the second and the last error term leads to the choice 
$\delta=\gamma^{1/5}$ and an error $O({\Omega'}^{-2}\gamma^{-2/5}+\gamma^{2/5})=O(\gamma^{2/5})$ 
provided ${\Omega'}^{-1}\leq \gamma^{2/5}$. For ${\Omega'}^{-1}>\gamma^{2/5}$ we choose $\delta={\Omega'}^{-1/2}$ (this corresponds to equating the first and the second error term in \eqref{error estimate 00}). Then the errors are $O({\Omega'}^{-1})$. Altogether we obtain \eqref{GP ultrastrong}.\end{proof}
%%%%%%%%%%%%%%%%%%%%%%%%%%%%%

%%%%%%%%%%%%%%%%%%%%%%%%
\emph {Remark:} The true GP density  is in general not concentrated around a single point in $\mathcal M$  and the trial function \eqref{trialfunctionultra}  is not designed to give optimal error bounds. Another obvious possibility is to replace $h_\delta$ by a regularization of the TF density and choose as phase factor of the \lq giant vortex' type like in \cite{CRY06}. In fact, in the proof of \eqref{QM Upper Boundbis} we use  a function of this form as an ingredient of the many-body trial function. Since Theorem \ref{EnAsOmegainfty} is not directly used for the proof of the corresponding many-body result we do not elaborate on this point further here.

%%%%%%%%%%%%%%%%%%%%%%%%%%%%%
\section{Proofs of the QM Limit Theorems}

In this section we derive the bounds on the quantum mechanical
ground state energy $E_{g,\Omega }^{\mathrm{QM}}$ that lead to the proofs of Theorems \ref{TF theorem 1}--\ref{TF theorem 3}.
The lower bound in the case of ultrarapid rotations is simply
obtained by dropping positive terms from the Hamiltonian.  In the
case $\omega<\infty$ one uses first the diamagnetic inequality
\cite{LL01} to eliminate the vector potential from the Hamiltonian
\eqref{ham2} and then proceeds with the techniques described in
\cite{LSSY05} for the non-rotating case. The upper bound for $\omega<\infty$ is
obtained by  first bounding the QM energy by the GP energy. The
method, that is a generalization of \cite{LiebSeiringerYngvason1},
is described  briefly in \cite{S03} for fixed $g$ and $\Omega$,
but in order to keep track of the error terms as $g$ and/or
$\Omega$ tend to $\infty$ and for completeness we carry it out in
more detail. Once a bound in terms of the GP energy has been
obtained, we can use Theorem \ref{EnAsOmegaconst} of the previous section to relate it to the TF energy. In the regime of ultrarapid rotation, $\omega\to\infty$, 
we use a slightly different method that gives an estimate in terms of the TF energy and error terms involving directly the TF density whose relevant properties were described in Theorem \ref{lemma asymptotic TF-1}. The limit Theorems 2.2-2.3 for the density are simple consequences of the energy bounds and are discussed in Subsection 4.2.

\subsection{Bounds on the QM energy}
\begin{proposition}[Lower bound for  the QM energy]
\label{QM Lower bound} \mbox{} \newline Let the potential $V$ be  homogenous
of order $s>2$. Then \beq\Omega^{%
-2s/(s-2)}N^{-1}E_{g,\Omega }^{\mathrm{QM}}\left( N\right) \geq \tfeinf.\label{lb1}\eeq Furthermore, if $\omega=g^{-(s-2)/(2s+6)}\Omega$ is fixed and $N^{-2}g^3\Vert \rho_{g,\Omega }^{\mathrm{TF}}\Vert_{\infty}\to 0$ as $N\to\infty$ then \beq\stackunder{N\rightarrow {}\infty }{\liminf }\left\{ g^{-\frac{s}{%
s+3}}N^{-1} E_{g,\Omega }^{\mathrm{QM}}\left( N\right)  \right\}
\geq\ttfe\label{lb2}\eeq
uniformly in  $\omega$ on any bounded interval.
\end{proposition}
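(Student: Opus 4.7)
\textbf{Strategy for \eqref{lb1}.} In the ultrarapid regime the Hamiltonian \eqref{ham2} splits into three contributions---the magnetic kinetic energy $\sum_j [-i\nabla_j - \vec A_\Omega(\vec x_j)]^2$, the pair interaction $\sum_{i<j} v(|\vec x_i - \vec x_j|)$, and the one-body term $\sum_j \bigl(V(\vec x_j) - \tfrac14\Omega^2 r_j^2\bigr)$. The first two are manifestly nonnegative, so
\[
H_N \;\geq\; N\,\inf_{\vec x\in\RT}\Bigl(V(\vec x) - \tfrac14\Omega^2 r^2\Bigr).
\]
Homogeneity of $V$ and the substitution $\vec x = \Omega^{2/(s-2)}\vec y$ convert this into $N\Omega^{2s/(s-2)}\inf_{\vec y}\bigl(V(\vec y)-\tfrac14 r_{\vec y}^2\bigr) = N\Omega^{2s/(s-2)}\tfeinf$, which is \eqref{lb1}.

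\textbf{Reduction via the diamagnetic inequality.} For bounded $\omega$ the kinetic term cannot be discarded outright, because a fraction of it has to be spent to convert the short-range pair interaction $v$ into the mean-field contribution $4\pi a\rho$. We begin by removing the vector potential: applying the pointwise diamagnetic inequality in each variable $\vec x_j$ separately (with the remaining variables frozen) and integrating over all coordinates yields
\[
\langle \Psi, H_N\Psi\rangle \;\geq\; \bigl\langle|\Psi|,\,\tilde H_N\,|\Psi|\bigr\rangle, \qquad \tilde H_N \equiv \sum_{j=1}^N \Bigl(-\Delta_j + V(\xv_j) - \tfrac14\Omega^2 r_j^2\Bigr) + \sum_{1\le i<j\le N} v(|\vec x_i-\vec x_j|).
\]
It therefore suffices to bound the ground state energy of the non-magnetic Hamiltonian $\tilde H_N$ on nonnegative symmetric wave functions.

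\textbf{Main estimate.} The bound on $\tilde H_N$ proceeds along the lines of the dilute-gas analysis of \cite{LSSY05}, now adapted to the effective one-body potential $V - \tfrac14\Omega^2 r^2$. Partition $\RT$ into boxes of side $\ell$, chosen small on the TF length scale $g^{1/(s+3)}$ so that the one-body potential is essentially constant on each box, yet large enough to host many particles in regions where the TF density is non-negligible. Use a fraction of the Laplacian combined with Dyson's lemma to replace $v$ by a soft, positive interaction $U_R$ of range $R$ with $a\ll R\ll\ell$. In each box apply Temple's inequality to bound the contribution of $U_R$ by $4\pi aN\rho^2$ times the box volume, up to $(1-Ca/R)$ factors and errors polynomial in the box occupation. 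Summing over boxes and writing $\rho=\rho^{\mathrm{QM}}_{N,g,\Omega}$ for the one-particle density, the integrand of the resulting lower bound coincides with the TF density functional \eqref{TF functional} (using $g=4\pi Na$), so
\[
N^{-1}\langle\Psi,H_N\Psi\rangle \;\geq\; \tff[\rho] - (\text{errors}) \;\geq\; \tfe - (\text{errors}).
\]
Dividing by $g^{s/(s+3)}$ and invoking the scaling \eqref{TF big interaction} yields \eqref{lb2}; uniformity on bounded intervals of $\omega$ follows from the uniform bounds on $R_\omega$ (cf.\ \eqref{external radius}) and on $\Vert\ttfm\Vert_\infty$ there.

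\textbf{Main obstacle.} The delicate point is the simultaneous optimization of $\ell$ and $R$: $\ell$ must be small enough to freeze the slowly varying potential on each box but large enough to contain many particles in the bulk of the TF support; $R$ must exceed $a$ sufficiently for Dyson's lemma to be quantitatively useful while remaining small enough that $U_R$ does not leak across box boundaries. Showing that the combined error is of lower order than $g^{s/(s+3)}$ under the diluteness hypothesis $N^{-2}g^3\Vert\rho^{\mathrm{TF}}_{g,\Omega}\Vert_\infty\to 0$, uniformly in $\omega$ on bounded intervals, is the main technical hurdle.
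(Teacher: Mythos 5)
Your proof of \eqref{lb1} is exactly the paper's: drop the nonnegative magnetic kinetic and interaction terms and bound the one-body term by $N\inf(V-\frac14\Omega^2r^2)=N\Omega^{2s/(s-2)}\tfeinf$. For \eqref{lb2} you share the paper's two main ingredients (the diamagnetic inequality to strip the vector potential, and the box decomposition with the Lieb--Yngvason/Dyson--Temple lower bound for the homogeneous gas in each box), but you diverge on how the external potential is treated. You freeze $V-\frac14\Omega^2r^2$ on each box and aim at a discretized version of $\tff[\rho]$ evaluated at the box-averaged QM density, then minimize; this works, but note that what you actually get is the discrete infimum over box occupations (not literally $\tff[\rho^{\mathrm{QM}}_{N,g,\Omega}]$, since Jensen goes the wrong way for $\int\rho^2$), and you must control the oscillation of the potential across boxes, including far-out boxes where $V$ is large. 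The paper instead substitutes the pointwise TF variational inequality $V(\vec x)\geq \mu_{g,\Omega}^{\mathrm{TF}}+\frac14\Omega^2r^2-2g\tfm(\vec x)$, which follows from \eqref{TF density}; this replaces the external potential by a constant plus a multiple of the \emph{explicit} TF density, so after rescaling the box argument only has to compare the homogeneous-gas energy against $-2g'\rho^{\mathrm{TF}}_{1,\omega}$, the error being the Riemann-sum discrepancy $\sum_\alpha d_\alpha^2 l^3-\int(\rho^{\mathrm{TF}}_{1,\omega})^2$ with $d_\alpha$ the maximum of the known TF density on box $\alpha$; the chemical potential is then converted back to $\ttfe$ via \eqref{TFchempot}. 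The paper's route buys a cleaner error analysis (all box errors are expressed through the explicit, compactly supported $\rho^{\mathrm{TF}}_{1,\omega}$, which is manifestly uniform for bounded $\omega$), while yours is more self-contained in that it never invokes the TF minimizer until the very last step; both close the argument by taking $N\to\infty$ first and then the box size to zero under the diluteness hypothesis.
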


\begin{proof}
To prove  \eqref{lb1}  consider a normalized  $N$-particle wave function $\Psi _N$ and
let $\hat\rho_N( \vec x)=\lambda^3\rho _{N}\left( \lambda
\vec{x}\right)$ with $\lambda=\Omega^{ 2/{s-2}}$ be the
corresponding scaled density. Then we can write
\begin{equation}
 \Omega ^{-2s/(s-2)}N^{-1}\left\langle
\Psi_{N},H_{N}\Psi_N\right\rangle =C_{\Psi_N}+\stackunder{\RT}{\inf }%
W,  \label{Griffiths proof 6}
\end{equation}
with $W\left( \vec{x}\right) =V\left( \vec{x}\right) -r^{2}/4$
and
\begin{eqnarray}
C_{\Psi _{N}} &\equiv & \Omega  ^{-2s/(s-2)}\left\|
[\nabla -i\vec A_\Omega]\Psi _{N}\right\| _{2}^{2}+\dint_{\RT}\hat{\rho }_{N}(\vec{%
x})\left( W(\vec{x})-\stackunder{\RT}{\inf }W\right)
\mathrm{d}\vec{x}+
\nonumber \\
&&+ \Omega ^{-2s/(s-2)}N^{-1}\stackunder{1\leq i<j\leq N}{\sum
}\left\langle \Psi _{N},v(|\vec{x}_{i}-\vec{x}_{j}|)\Psi
_{N}\right\rangle . \label{Griffiths proof 6bis}
\end{eqnarray}
Since the interaction potential $v$ is by assumption nonnegative the same holds for
$C_{\Psi _{N}}$ so the left hand side of  \eqref{lb1} is $\geq \inf W=\tfeinf$.

Now, let us consider the case when $\omega <\infty$ is fixed as $N\rightarrow {}\infty $.
By the diamagnetic inequality,
$ |(\nabla-i\vec A(\vec x))f(\vec x)|\geq |\nabla|f(\vec x)||$, cf.\ \cite{LL01},
and the bound
\begin{equation}
V\left( \vec{x}
\right)\geq \mu
_{g,\Omega }^{\mathrm{TF}}+\hbox{$\frac{1}{4}$}\Omega ^{2}r^{2}-2g\rho _{g,\Omega }^{\mathrm{TF}}\left( \vec{x}\right)   \label{TF rotating eq 00}
\end{equation}
that follows from Eq.\ \eqref{TF density}
we obtain
\begin{equation}
E_{g,\Omega }^{\mathrm{QM}}\left( N\right) \geq N\mu _{g,\Omega }^{\mathrm{%
TF}}+\stackunder{\Psi, \Vert\Psi\Vert=1 }{\inf }Q\left( \Psi \right) , \label{TF
rotating eq 00bisbis}
\end{equation}
with
\begin{equation}
Q\left( \Psi \right) \equiv \stackrel{N}{\stackunder{i=1}{\sum }}%
\left\| \nabla _{i}\Psi \right\| ^{2}+\stackunder{1\leq i<j\leq N}{\sum }%
\int   v(|\vec{x}_{i}-\vec{x}_{j}|)\left| \Psi \right|
^{2}{\rm d}^{3N}\vec x\, -2g\stackrel{N}{\stackunder{i=1}{\sum
}} \int\rho _{g,\Omega }^{\mathrm{TF}}\left(
\vec{x}_{i}\right) \left| \Psi \right| ^{2} {\rm d}^{3N}\vec x\,  .
\label{TF rotating eq 00bis}
\end{equation}
Since $\rho _{g,\Omega }^{\mathrm{TF}}(\vec x)$ tends to zero for every $\vec x$ as $g\to\infty$, cf. \eqref{TFdensityscalingstrong} it is convenient at this point to carry out a rescaling by writing $\vec x=\lambda \vec x'$ with $\lambda=g^{1/(s+3)}$, cf.\ \eqref{lambda1}.  The scaled interaction potential ${v'}(\vec x')=\lambda ^{2}v(\lambda \vec x')$ has scattering length ${a'}=g^{-1/(s+3)}a$ and the corresponding coupling parameter is
\beq {g'}=4\pi N{a'}=g^{(s+2)/(s+3)}.\eeq
Using \eqref{TFchempot} and \eqref{TF big interaction} we obtain
\begin{equation}
 g^{-s/(s+3)}N^{-1}E_{g,\Omega }^{\mathrm{QM}}-\ttfe\geq \int (\rho^{\rm TF}_{1,\omega})^2 +(N{g'})^{-1}{\inf _{\Psi, \Vert\Psi\Vert=1}}{Q'}( \Psi ) , \label{TF
rotating eq 00bisbis_s}
\end{equation}
where, dropping the primes on the integration variables,
\begin{equation}
{Q'}( \Psi) \equiv \stackrel{N}{\stackunder{i=1}{\sum }}%
\left\| \nabla _{i}\Psi \right\| ^{2}+\stackunder{1\leq i<j\leq N}{\sum }%
\int  {v'}(|\vec{x}_{i}-\vec{x}_{j}|)\left| \Psi \right|
^{2}{\rm d}^{3N}\vec x\, -2{g'}\stackrel{N}{\stackunder{i=1}{\sum
}} \int\rho _{1,\omega }^{\mathrm{TF}}\left(
\vec{x}_{i}\right) \left| \Psi \right| ^{2} {\rm d}^{3N}\vec x\,  .
\label{TF rotating eq 00bis_s}
\end{equation}
We are now exactly in the situation discussed in \cite{LSSY05} for the nonrotating case, cf.\  Eq.\ (6.61) in \cite{LSSY05}.
Like there, the next step is to divide  space into boxes, labeled by $\alpha$ and of side length $l$,  with Neumann boundary conditions and use the lower bound of \cite{LiebYngvason1} for the homogeneous gas in each box.
The result is (cf. Eq.\ (6.62) in \cite{LSSY05})
\beq g^{-\frac{s}{%
s+3}}N^{-1} E_{g,\Omega }^{\mathrm{QM}}\left( N\right)
-\ttfe \geq \int (\rho^{\rm TF}_{1,\omega})^2-\sum_\alpha d_\alpha^2 l^3(1-C{Y'}^{1/17}).\label{411}
\eeq
Here
$d_{\alpha }$ is the maximum value of $\rho^{\rm TF}_{1,\omega}$ in the box $\alpha$ and
${Y'}={a'}^3 N/l^3\sim {g'} N^{-2}/l^3$. Since
\beq gN^{-2}\Vert \rho^{\rm TF}_{g,\Omega}\Vert_\infty={g'} N^{-2}\Vert \rho^{\rm TF}_{1,\omega}\Vert_\infty\eeq
the diluteness condition implies that ${Y'}\to 0$ for fixed $l$. If we  now  first take $N\to\infty$ and then
$\l\to 0$,  the Riemann approximation of $\int (\rho^{\rm TF}_{1,\omega})^2$ implies that the right hand side of \eqref{411} tends to zero, proving \eqref{lb2} for fixed $\omega$. It is also clear that all estimates are uniform in $\omega$ on bounded sets, so in particular one can take $\omega$ to $0$.
\end{proof}

%%%%%%%%%%%%%%%%%%%%%%%%%%%%

\begin{proposition}[Upper bound on the QM energy for {\bf $\omega<\infty$}]
    \label{QM Upper Bound}
    \mbox{} \\
Let the potential $V$ be homogenous of order $s>2$ and suppose the diluteness condition,
$N^{-2}g^3\Vert \rho_{g,\Omega
}^{\mathrm{TF}}\Vert_{\infty}\to 0$ as $N\to\infty$, is fulfilled.  If $\omega$ is fixed, then
\beq \stackunder{N\rightarrow {}\infty }{\limsup }\left\{ g^{-\frac{s}{%
s+3}}N^{-1} E_{g,\Omega }^{\mathrm{QM}}\left( N\right)  \right\}
\leq\ttfe\eeq uniformly in $\omega$ on bounded intervals.
\end{proposition}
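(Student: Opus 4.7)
The plan is to reduce the upper bound on the quantum mechanical energy to an upper bound in terms of the GP energy $\gpe$, and then invoke Theorem \ref{EnAsOmegaconst} to pass from GP to TF. This two-step strategy mirrors the program outlined by the authors at the beginning of Section 4 and rests on the trial function construction of \cite{S03}.

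For the first step I would take a normalized GP minimizer $\phi=\gpm$ (whose existence and complex vortex-carrying structure are guaranteed by the discussion following \eqref{GP energy}) and build a symmetric $N$-body trial state of Dyson--Jastrow form
\beq
\Psi_N(\xv_1,\dots,\xv_N)=\prod_{i=1}^N \phi(\xv_i)\,F(\xv_1,\dots,\xv_N),
\eeq
where $F$ is the nearest-neighbor correlation factor built from the zero-energy scattering solution of $v$, exactly as in \cite{LiebSeiringerYngvason1} and adapted to the rotating setting in \cite{S03}. The point is that $F$ is real and depends only on interparticle distances, so it commutes with the phase of $\phi$; the magnetic gradient $[\nabla-i\magnpo]$ acts only on the product $\prod_i\phi(\xv_i)$, while the radial derivatives of $F$ produce the short-scale correlation energy independent of that phase.

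Next I would evaluate $\langle\Psi_N,H_N\Psi_N\rangle/\|\Psi_N\|^2$. The one-body part $\sum_j\bigl([-i\nabla_j-\magnpo]^2+V(\xv_j)-\tfrac14\Omega^2 r_j^2\bigr)$ acting on $\prod_i\phi(\xv_i)$ contributes, to leading order, $N$ times the quadratic part of $\mathcal E^{\mathrm{GP}}_{g,\Omega}[\phi]$; the kinetic energy of $F$ combines with the interaction $\sum_{i<j}v(|\xv_i-\xv_j|)$ to reproduce the nonlinear term $g\int|\phi|^4$ (this is the content of the Dyson lemma as used in \cite{LSSY05}). Cross terms between $\nabla F$ and the magnetic gradient of $\prod_i\phi$ are controlled by the localization of $F$ near the diagonal and estimated using $\Vert\rho_{g,\Omega}^{\mathrm{GP}}\Vert_\infty$ and $\Vert\nabla|\phi|\Vert_2$, both of which can be bounded via the TF density after rescaling by $\lambda=g^{1/(s+3)}$. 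This yields
\beq
\bigl\langle\Psi_N,H_N\Psi_N\bigr\rangle/\|\Psi_N\|^2\le N\gpe\bigl(1+o(1)\bigr),
\eeq
where the $o(1)$ is expressed in terms of $Y\equiv a^3 N\Vert\rho_{g,\Omega}^{\mathrm{TF}}\Vert_\infty\sim N^{-2}g^3\Vert\rho_{g,\Omega}^{\mathrm{TF}}\Vert_\infty$, which tends to zero by the diluteness hypothesis; as in \cite{LSSY05}, the estimates are insensitive to whether $\phi$ is real or carries a complicated phase, only its modulus and gradient enter.

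Dividing by $Ng^{s/(s+3)}$ and applying Theorem \ref{EnAsOmegaconst} then gives
\beq
g^{-s/(s+3)}N^{-1}E_{g,\Omega}^{\mathrm{QM}}(N)\le g^{-s/(s+3)}\gpe+o(1)=\ttfe+o(1),
\eeq
and both the GP-to-TF error in \eqref{3.10} and the diluteness remainder are uniform for $\omega$ in bounded sets, so the $\limsup$ bound holds uniformly in $\omega$ on any bounded interval. The main technical obstacle is verifying that the Dyson correlation factor can be coupled to the complex, vortex-carrying GP minimizer without generating spurious interference terms that blow up when $g$ and/or $\Omega$ diverge; this is the point where one really needs the Seiringer argument of \cite{S03}, together with careful bookkeeping of the rescaling by $\lambda=g^{1/(s+3)}$ so that the diluteness condition, which involves $\Vert\rho^{\mathrm{TF}}_{g,\Omega}\Vert_\infty$ and may deteriorate as $\omega$ grows, still dominates every error term.
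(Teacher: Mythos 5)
Your overall strategy---a Dyson--Jastrow trial state $\Psi_N=F\prod_i\gpm(\xv_i)$ with real $F$, the estimate $E^{\mathrm{QM}}_{g,\Omega}(N)\le N\gpe\bigl(1+o(1)\bigr)$, and then Theorem \ref{EnAsOmegaconst} to pass from the GP to the TF energy---is exactly the paper's route, including the observation that the reality of $F$ decouples the correlation factor from the vortex phase after a partial integration against the GP equation. There is, however, one genuine gap: you assert that $\Vert\rho_{g,\Omega}^{\mathrm{GP}}\Vert_\infty$ ``can be bounded via the TF density after rescaling by $\lambda=g^{1/(s+3)}$.'' Rescaling gives $g^{3/(s+3)}\Vert\rho^{\mathrm{TF}}_{g,\Omega}\Vert_\infty=\Vert\rho^{\mathrm{TF}}_{1,\omega}\Vert_\infty$ but says nothing whatever about the GP density, while every error term produced by the Dyson construction---$N\Vert\rho^{\mathrm{GP}}_{g,\Omega}\Vert_\infty I$, $g^{-1}N^2K^2\Vert\rho^{\mathrm{GP}}_{g,\Omega}\Vert_\infty$, and the choice $b=(N\Vert\rho^{\mathrm{GP}}_{g,\Omega}\Vert_\infty)^{-1/3}$---is governed by the GP density, whereas the diluteness hypothesis is stated for the TF density.

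The paper closes this gap with a maximum-principle argument that your proposal omits: write $\gpm=Re^{iS}$, evaluate the GP equation at a point where $R$ is maximal (so $\Delta R\le 0$), use $-(\nabla S)^2+2\vec A\cdot\nabla S\le A^2$ to obtain $2g\Vert\rho^{\mathrm{GP}}_{g,\Omega}\Vert_\infty\le\mu^{\mathrm{GP}}_{g,\Omega}-\inf_{\xv}\bigl\{V-\hbox{$\frac14$}\Omega^2r^2\bigr\}$, compare with the exact identity $2g\Vert\rho^{\mathrm{TF}}_{g,\Omega}\Vert_\infty=\mu^{\mathrm{TF}}_{g,\Omega}-\inf_{\xv}\bigl\{V-\hbox{$\frac14$}\Omega^2r^2\bigr\}$, and then invoke Theorem \ref{EnAsOmegaconst} a second time to control $\mu^{\mathrm{GP}}_{g,\Omega}-\mu^{\mathrm{TF}}_{g,\Omega}$, yielding $\Vert\rho^{\mathrm{GP}}_{g,\Omega}\Vert_\infty\le 2\Vert\rho^{\mathrm{TF}}_{g,\Omega}\Vert_\infty(1+o(1))$ uniformly for bounded $\omega$. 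This step is not cosmetic bookkeeping: the paper notes immediately after the proof that it is precisely this bound whose error term blows up as $\omega\to\infty$, which is why the ultrarapid regime requires the entirely different trial function of Proposition \ref{QM Upper Boundbis}. Without supplying this argument, your $o(1)$'s are expressed in a quantity that the stated hypotheses do not control.
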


\begin{proof}
    The proof is a combination of a variational bound on the QM energy in terms of the GP energy and the bounds of the GP energy in terms of the TF energy that were  discussed  in Section 3. For the former we can use the same method as in \cite{LiebSeiringerYngvason1} and \cite{Seiringer06-1} and not  all details will be repeated here, but we shall keep track of the error terms and their dependence on the various parameters.

 The main step is to show that under the stated assumptions
    \begin{equation}
        \label{upperQM-GP}
        \frac{E^{\mathrm{QM}}_{g,\Omega}\left(N\right)-NE^{\mathrm{GP}}_{g,\Omega}}{Ng\|\rho _{g,\Omega }^{\mathrm{GP}}\|_\infty} \leq
        o\left(1\right),
    \end{equation}
    by exhibiting a sequence of $N$ particle
    trial functions $\Psi_N$, $N=1,2,\dots$,  such that
    \begin{equation}
    \label{weshow}
        \frac{\langle\Psi_N,H,\Psi_N\rangle\langle\Psi_N,\Psi_N\rangle^{-1}-NE^{\mathrm{GP}}_{g,\Omega}}{Ng\|\rho _{g,\Omega }^{\mathrm{GP}}\|_\infty} \leq o\left(1\right).\;
    \end{equation}
    We write the  trial functions in the form
\beq\label{trial0}
        \Psi_N=F(\vec x_1,\dots,  \vec x_N)G( \vec x_1,\dots,  \vec x_N)
\eeq{with}\beq\label{trial1}G( \vec x_1,\dots,  \vec x_N) \equiv \mbox{$\prod_{i=1}^N$}
        \phi^{\mathrm{GP}}_{g,\Omega}( \vec x_i) \eeq
  and a {\it real} function $F$.    Partial integration, using the variational equation \eqref{GPeq} and the reality of $F$,  leads to
  \begin{multline}
        \label{mini}
        \langle\Psi_N,H,\Psi_N\rangle=N\mu^{\mathrm{GP}}_{g,\Omega}\langle
        \Psi_N,\Psi_N\rangle+\\ \sum_{1\leq i\leq N}\int_{\R^{3N}}{\phantom{}} |\nabla_iF|^2|G|^2
        +\sum_{1\leq i<j\leq
        N}\int_{\R^{3N}}{\phantom{}} v(\xij)|F|^2|G|^2
       - 2g\sum_{1\leq i\leq
        N}\int_{\R^{3N}}{\phantom{}}\rho _{g,\Omega }^{\mathrm{GP}}(\vec x_i)|F|^2|G|^2\;.
    \end{multline}
   The second line of \eqref{mini} is a real quadratic form in $F$  and we shall make use of the fact for an upper bound on the bosonic ground state energy  is not necessary to require that the trial function $F$ is symmetric under permutations of the variables.   This can be seen by a simple adaption of an argument of Lieb \cite{lieb} which implies that the infimum over all functions $F$ is the same as the infimum over all nonnegative, symmetric functions.

   Like in \cite{LiebSeiringerYngvason1} we shall take a trial function of the  Dyson form \cite{dyson}
   %%%%%%%%%%%%%%%%%%%%%%%%%%%%%
\begin{equation}
\label{dyson form 1}
 F( \vec x_1,\dots,  \vec x_N)=\mbox{$\prod_{i=1}^N$} F_i( \vec x_1,\dots,  \vec
x_i)
\end{equation}
where
\begin{equation}
  F_i( \vec x_1,\dots, \vec x_i)=f(t_i), \quad
        t_i=\min\left(\xij,j=1,\dots, i-1\right),\label{deft}
\end{equation}
with a function $f$ satisfying
$$
0\leq f\leq 1, \quad
        f'\geq 0.\label{propf}\;
$$
The function $f$ will be  specified shortly. Our estimates  involve the quantities
\begin{equation}\label{IJK}
I\equiv\int_{\RT} (1-f^2),\quad
J\equiv\int_{\RT}
            \left({f'}^2+\hbox{$\frac12$}v\,f^2\right),\quad
            K\equiv \int_{\RT}f f'.
\end{equation}
By exactly the same computation as leads to Eq. (3.29) in \cite{LiebSeiringerYngvason1} we obtain, provided $N\|\rho _{g,\Omega }^{\mathrm{GP}}\|_{\infty}I<1$,
\begin{multline} \label{equation for interaction}\Vert FG\Vert_2^{-2}       \left\{ \int_{\R^{3N}}{\phantom{}} |\nabla F|^2|G|^2
        +\sum_{i<j}\int_{\R^{3N}}{\phantom{}} v(\xij)F^2|G|^2\right\}\leq\\
        \frac 1{{(1-N\|\rho _{g,\Omega }^{\mathrm{GP}}\|_\infty}I)^{2}}\left\{ N^2 J\int_{\RT} {\phantom{}}  \rho _{g,\Omega }^{\mathrm{GP}}(\vec x)^2  +
        \frac{2}{3}N^3{ K^2}\|\rho _{g,\Omega }^{\mathrm{GP}}\|_\infty^2\right\}\end{multline}
and the same technique gives also a bound on the last term in
\eqref{mini}
\begin{equation}
\label{equation for external potential} - 2g\Vert FG\Vert_2^{-2}
\sum_{0<i\leq
        N}\int_{\R^{3N}}{\phantom{}}\rho _{g,\Omega }^{\mathrm{GP}}(\vec x_i)|F|^2|G|^2
        \leq -2gN\int_{\RT} \diff\vec x  \rho _{g,\Omega }^{\mathrm{GP}}(\vec x)^2
        +2g N^2I \|\rho _{g,\Omega }^{\mathrm{GP}}\|_\infty^2.
\end{equation}
%%%%%%%%%%%%%%%%%%
    %%%%%%%%%%%%%%%%%%%%%%%%%%%%
   We now choose the function $f$.
 For a parameter $b>a$ that will soon be fixed we define
    \begin{equation}
        \label{deff}
        f(r)=
        \begin{cases}
            (1+\eps_1)u(r)/r &\text{for $r\leq b$}\\ 1
            &\text{for $r>b$}
        \end{cases}
    \end{equation}
    where $u(r)$ is the solution of the scattering equation
    \begin{equation*}
        -u''(r)+\half v(r)u(r)=0 \quad\text{ with $u(0)=0$,
        $\lim_{r\to\infty}u'(r)=1$}
    \end{equation*}
    and $\eps_1$ is determined by requiring $f$ to be continuous.
    Convexity of $u$ gives
    \begin{equation*}
        r\geq u(r)\geq
        \begin{cases}
            0& \text{for $r\leq a$}\\
            r-a& \text{for $r>a$}
        \end{cases}
        ,\quad 1\geq u'(r) \geq
        \begin{cases}
            0& \text{for $r\leq a$}\\
            1-\frac{a}{r}& \text{for $r>a$\;}.
        \end{cases}
    \end{equation*}
    These estimates imply
    \begin{align}
        \label{imply}
        &J\leq (1+\eps_1)^2 4\pi a\\
        &\label{parameter I} I\leq 4\pi\left(\frac{a^3}{3}+ab(b-a)\right)\\
        &K\leq 4\pi
        (1+\eps_1)a\left(b-\frac{a}{2}\right)\\
        &0\leq\eps_1\leq\frac{a}{b-a}.\label{imply2}
    \end{align}

Before proceeding further, we need to relate the supremum of the GP density, $||\rho _{g,\Omega }^{\mathrm{GP}%
}||_{\infty }$, to $||\rho _{g,\Omega }^{\mathrm{TF}%
}||_{\infty }$ since the diluteness condition is stated in terms of the latter. For this purpose we write $\phi^{\mathrm{GP}}_{g,\Omega}=Re^{iS}$ with real $S$ and the nonnegative amplitude $R$. A straightforward computation, using $\nabla\cdot \vec A=0$, gives
    \begin{eqnarray*}
       - (\nabla-i\vec A)^2\phi^{\mathrm{GP}}_{g,\Omega}&=&(-\Delta+2i \vec A\cdot \nabla + A^2
        )\phi^{\mathrm{GP}}_{g,\Omega}\\&=&
        -(\Delta R) e^{iS}-2i(\nabla R)\cdot (\nabla S) e^{iS}+(\nabla
        S)^2Re^{iS}-i(\Delta S) Re^{iS}
        \\&&+2i\vec A\cdot (\nabla R) e^{iS}-2\vec A\cdot (\nabla S) Re^{iS}+ A^2 R e^{iS}
    \end{eqnarray*}
    and from the GP equation (\ref{GPeq}) one obtains
    $$\left(-\Delta +(\nabla S)^2
    -2\vec A\cdot (\nabla S) +A^2 +2g\rho _{g,\Omega }^{\mathrm{GP}}
    +V -\hbox{$\frac{1}{4}$}\Omega^2 r^2\right)R=\mu^{\mathrm{GP}}_{g,\Omega} R.$$
  In any point $\vec x\in\R^3$ where $R$ is maximal $\rho _{g,\Omega }^{\mathrm{GP}}(\vec x)=\|\rho _{g,\Omega }^{\mathrm{GP}}\|_\infty$
    and $\Delta R(\vec x)\leq 0$. Thus,
    $$2g\|\rho _{g,\Omega }^{\mathrm{GP}}\|_\infty\leq-(\nabla S(\vec x))^2
    +2\vec A(x)\cdot \nabla S(\vec x) -A^2(\vec x) -V(\vec x) +\hbox{$\frac{1}{4}$}\Omega^2r^2
    +\mu^{\mathrm{GP}}_{g,\Omega}\;.$$ and since $-(\nabla S)^2
    +2\vec A\cdot \nabla S\leq A^2$ we obtain
    $$2g\|\rho _{g,\Omega }^{\mathrm{GP}}\|_\infty\leq\mu^{\mathrm{GP}}_{g,\Omega}- \inf_{\vec{x}\in\mathbb{R}^3}
    \left\{V(\vec x)-\hbox{$\frac{1}{4}$}\Omega^2 r^2\right\}\;.$$
On the other hand, by  (\ref{TF density}) we have
\begin{equation}
2g||\rho _{g,\Omega }^{\mathrm{TF}}||_{\infty }=\mu _{g,\Omega }^{\mathrm{TF}%
}-\stackunder{\vec{x}\in \Bbb{R}^{3}}{\inf }\left\{ V\left( \vec{x}\right) -%
\hbox{$\frac{1}{4}$}\Omega ^{2}r^{2}\right\}
\end{equation}
and therefore, using \eqref{GPchempot} and \eqref{TFchempot},
\begin{equation}
2g||\rho _{g,\Omega }^{\mathrm{GP}}||_{\infty }\leq \mu _{g,\Omega }^{%
\mathrm{GP}}-\mu _{g,\Omega }^{\mathrm{TF}}+2g||\rho _{g,\Omega }^{\mathrm{TF%
}}||_{\infty }\leq E_{g,\Omega }^{\mathrm{GP}}-E_{g,\Omega }^{\mathrm{TF}%
}+g||\rho _{g,\Omega }^{\mathrm{GP}}||_{\infty }+2g||\rho _{g,\Omega }^{%
\mathrm{TF}}||_{\infty },
\end{equation}
i.e.
\begin{equation}
||\rho _{g,\Omega }^{\mathrm{GP}}||_{\infty }\leq 2||\rho
_{g,\Omega }^{\mathrm{TF}}||_{\infty }\left\{1+ \frac{E_{g,\Omega
}^{\mathrm{GP}}-E_{g,\Omega }^{\mathrm{TF}}}{g||\rho _{g,\Omega }^{\mathrm{TF%
}}||_{\infty }}\right\} . \label{moderate equation 1}
\end{equation}
By using (\ref{TF big interaction}) and Theorem
\ref{EnAsOmegaconst} one sees that
\begin{equation}
\frac{E_{g,\Omega }^{\mathrm{GP}}-E_{g,\Omega
}^{\mathrm{TF}}}{g||\rho
_{g,\Omega }^{\mathrm{TF}}||_{\infty }}=||\rho
_{1,\omega }^{\mathrm{TF}}||_{\infty }^{-1}\left(g^{-s/(s+3)}E_{g,\Omega }^{%
\mathrm{GP}}-E_{1,\omega }^{\mathrm{TF}}\right)= o(1),
\end{equation}
so
\begin{equation}
\label{rho GP leq rho TF}
||\rho _{g,\Omega }^{\mathrm{GP}}||_{\infty }\leq 2||\rho _{g,\Omega }^{%
\mathrm{TF}}||_{\infty }\left\{ 1+o(1)\right\} .
\end{equation}
The diluteness condition $N^{-2}g^{3}||\rho
_{g,\Omega }^{\mathrm{TF}}||_{\infty }\rightarrow 0 $
thus implies the corresponding condition for the GP density,
i.e.,  $N^{-2}g^{3}||\rho _{g,\Omega }^{\mathrm{GP}%
}||_{\infty }\rightarrow 0.$ Therefore,
by choosing
    \begin{equation}
        \label{b}
        b=(N\|\rho _{g,\Omega
        }^{\mathrm{GP}}\|_\infty)^{-\frac{1}{3}},
    \end{equation}
it follows    from (\ref{imply})-(\ref{imply2})  that $\eps_1\rightarrow 0$,
    $\eps_2\equiv N\|\rho _{g,\Omega }^{\mathrm{GP}}\|_\infty I\rightarrow0$ and
    $\eps_3\equiv g^{-1} N^2K^2\|\rho _{g,\Omega }^{\mathrm{GP}}\|_\infty \rightarrow0$
    for $N\rightarrow\infty$.  Altogether one gets from  
    (\ref{mini}), using \eqref{GPchempot}, (\ref{equation for interaction}) and (\ref{equation for external
    potential}),  that
    \begin{eqnarray*}
        N^{-1}E_{g,\Omega }^{\mathrm{QM}}&\leq& E_{g,\Omega }^{\mathrm{GP}}+g\|\rho _{g,\Omega
        }^{\mathrm{GP}}\|^{2}_2
        \left\{{{O}}(\eps_1)+{{O}}(\eps_2)\right\}\;+
        g\|\rho _{g,\Omega }^{\mathrm{GP}}\|_\infty{{O}}(\eps_3)
        \leq E_{g,\Omega }^{\mathrm{GP}}+
       o(1) g\|\rho _{g,\Omega
        }^{\mathrm{GP}}\|_\infty ,
    \end{eqnarray*}
    i.e. (\ref{weshow}).
By (\ref{TFdensityscaling}) we have
\begin{equation}
g^{-s/(s+3)}\left\{ g||\rho _{g,\Omega
}^{\mathrm{TF}}||_{\infty }\right\} =g^{3/(s+3)}||\rho
_{g,\Omega }^{\mathrm{TF}}||_{\infty }=||\rho _{1,\omega
}^{\mathrm{TF}}||_{\infty },
\end{equation}
and by (\ref{rho GP leq rho TF}) and Theorem \ref{EnAsOmegaconst} we can
conclude that
\begin{equation}
g^{-s/(s+3)}N^{-1}E_{g,\Omega }^{\mathrm{QM}}\leq g^{-s/(s+3)%
}E_{g,\Omega }^{\mathrm{GP}}+o(1)||\rho _{1,\omega
}^{\mathrm{TF}}||_{\infty } =E_{1,\omega
}^{\mathrm{TF}}+o\left( 1\right).
\end{equation}\end{proof}
For ultrarapid rotations the proof of \eqref{rho GP leq rho TF} given above is not valid because the error term may blow up as
$\omega\to\infty$.  We shall therefore treat this case separately, using a trial function different from \eqref{trial1}. If a general proof of $||\rho _{g,\Omega }^{\mathrm{GP}}||_{\infty }\leq {(\rm const.)}||\rho _{g,\Omega }^{%
\mathrm{TF}}||_{\infty }$  can be found, then Eq.\ \eqref{weshow}  is verified also for $\omega\to \infty$ and the proof of the next proposition would follow in the same way as the previous one.

\begin{proposition}[Upper bound on the QM energy  for $\omega\to\infty$]
\label{QM Upper Boundbis} \mbox{} \newline Let the potential $V$
be homogenous of order $s>2$ and suppose the diluteness condition,
$N^{-2}g^{3}\Vert \rho _{g,\Omega }^{\mathrm{TF}}\Vert _{\infty
}\to 0$ as $N\to \infty $, is fulfilled. If $\Omega\to\infty$ and $\omega \rightarrow
\infty $ as $N\to\infty$, then
\[
\stackunder{N\rightarrow {}\infty }{\limsup }\left\{\Omega
^{-2s/(s-2)}N^{-1}E_{N,g,\Omega }^{\mathrm{QM}}\left( N\right)
\right\} \leq \tfeinf.
\]
\end{proposition}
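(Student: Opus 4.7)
The plan is to construct an explicit many-body trial function that avoids the use of the GP minimizer: the control $\|\rho_{g,\Omega}^{\mathrm{GP}}\|_\infty\leq C\|\rho_{g,\Omega}^{\mathrm{TF}}\|_\infty$ used in the proof of Proposition \ref{QM Upper Bound} came from Theorem \ref{EnAsOmegaconst} and is no longer available when $\omega\to\infty$. I would combine the ``giant vortex'' one-particle ansatz from the proof of Theorem \ref{EnAsOmegainfty} with the Dyson-type correlation factor $F=\prod_i f(t_i)$ of \eqref{dyson form 1}--\eqref{deft}. Concretely, fix $\vec x_0\in\mathcal M$, a smooth nonnegative bump $h\in C_c^\infty(\mathbb R^3)$ with $\int h=1$ supported in the unit ball, and for a parameter $\delta>0$ set $h_\delta(\vec x)=\delta^{-3}h(\vec x/\delta)$ and
\[
\tilde\phi(\vec x)=\sqrt{h_\delta(\vec x-\vec x_0)}\,e^{iS(\vec x)},\qquad S(\vec x)=\tfrac12\Omega\,\vec x\cdot(\vec e_z\wedge\vec x_0),
\]
as in \eqref{trialfunctionultra}, so that $\nabla S=\vec A_\Omega(\vec x_0)$. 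The many-body trial function is then
\[
\Psi_N(\vec x_1,\dots,\vec x_N)=F(\vec x_1,\dots,\vec x_N)\prod_{i=1}^N\tilde\phi(\vec x_i),
\]
with $F$ the real Dyson factor \eqref{dyson form 1}--\eqref{deft} and the cutoff chosen as $b\sim(N\delta^{-3})^{-1/3}$.

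To estimate $\langle\Psi_N,H_N\Psi_N\rangle$ I would expand $(-i\nabla_j-\vec A_\Omega)(FG)$ with $G=\prod_i\tilde\phi(\vec x_i)$ and integrate by parts using the reality of $F$, exactly as in the derivation of \eqref{equation for interaction} but \emph{without} the simplification \eqref{mini} (which required $\tilde\phi$ to solve the GP equation). The $|\nabla F|^2|G|^2$ and pair-interaction terms are controlled by the Dyson bound \eqref{equation for interaction} with $\rho_{g,\Omega}^{\mathrm{GP}}$ replaced by $|\tilde\phi|^2=h_\delta(\cdot-\vec x_0)$, yielding a contribution of order $Ng\|\tilde\phi\|_\infty^2\sim Ng\delta^{-3}$ provided $N\delta^{-3}I\to 0$. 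The one-body kinetic-plus-magnetic piece is
\[
N\int|(\nabla-i\vec A_\Omega)\tilde\phi|^2\,\diff\vec x=N\int|\nabla\sqrt{h_\delta}|^2\,\diff\vec x+N\int h_\delta(\vec x-\vec x_0)\,|\vec A_\Omega(\vec x)-\vec A_\Omega(\vec x_0)|^2\,\diff\vec x=O(N\delta^{-2})+O(N\Omega^2\delta^2),
\]
and the trap-plus-centrifugal term equals $N\tfeinf+O(N\delta^2)$ by a Taylor expansion of $W\in C^2$ around its minimum $\vec x_0$, as in \eqref{error estimate}.

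Dividing by $N\Omega^{2s/(s-2)}$ one obtains
\[
\Omega^{-2s/(s-2)}N^{-1}E_{g,\Omega}^{\mathrm{QM}}\leq\tfeinf+O\!\left(\Omega^{-2s/(s-2)}\delta^{-2}+\Omega^{-4/(s-2)}\delta^{2}+\delta^{2}+g\,\Omega^{-2s/(s-2)}\delta^{-3}\right),
\]
and the proof is completed by the same two-regime optimization of $\delta$ used at the end of the proof of Theorem \ref{EnAsOmegainfty}: take $\delta\sim\gamma^{1/5}$ when $\Omega'^{-1}\leq\gamma^{2/5}$ and $\delta\sim\Omega'^{-1/2}$ otherwise. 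The main obstacle is to check that the Dyson validity condition $a^3N\delta^{-3}\to 0$ and the rescaled interaction error $g\,\delta^{-3}\,\Omega^{-2s/(s-2)}\to 0$ are both consistent with this choice of $\delta$ under the diluteness hypothesis $N^{-2}g^3\|\rho_{g,\Omega}^{\mathrm{TF}}\|_\infty\to 0$; this reduces, via the scaling \eqref{TFdensityscalingstrong} and the bound $\|\rho_{\gamma,1}^{\mathrm{TF}}\|_\infty=O(\gamma^{-3/5})$ from Theorem \ref{lemma asymptotic TF-1}, to an inequality between powers of $g$ and $\Omega$ that follows from the diluteness assumption.
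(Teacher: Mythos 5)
Your construction is close in spirit to the paper's, but there is a genuine gap in the verification of the diluteness condition for your trial state. The Dyson-type bounds \eqref{equation for interaction}, their validity condition $N\Vert\rho\Vert_\infty I<1$, and the subleading error terms ($N^3K^2\Vert\rho\Vert_\infty^2$, etc.) are all controlled by $N^{-2}g^3\Vert\rho\Vert_\infty$ where $\rho$ is the one-body density of the \emph{trial} function — here $h_\delta$, whose sup norm in the scaled variables is $\sim\delta^{-3}=\gamma^{-3/5}$ for your choice $\delta\sim\gamma^{1/5}$, i.e.\ $\sim\Omega^{-6/(s-2)}\gamma^{-3/5}$ in the original variables. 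The hypothesis only gives $N^{-2}g^3\Vert\rho^{\mathrm{TF}}_{g,\Omega}\Vert_\infty=N^{-2}g^3\Omega^{-6/(s-2)}\Vert\rho^{\mathrm{TF}}_{\gamma,1}\Vert_\infty\to0$, and the estimate $\Vert\rho^{\mathrm{TF}}_{\gamma,1}\Vert_\infty=O(\gamma^{-3/5})$ from Theorem \ref{lemma asymptotic TF-1} is an \emph{upper} bound, so you are using it in the wrong direction: whenever $\mathcal M$ is not a finite set of points (e.g.\ a circle, which is the generic rotationally symmetric case, cf.\ the Remark after Theorem \ref{lemma asymptotic TF-1}), $\Vert\rho^{\mathrm{TF}}_{\gamma,1}\Vert_\infty$ is of strictly lower order than $\gamma^{-3/5}$ (for a circle it is $\sim\gamma^{-1/2}$), and $N^{-2}g^3\Omega^{-6/(s-2)}\gamma^{-3/5}\to0$ does \emph{not} follow from the stated diluteness assumption. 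Your bump is more sharply peaked than the TF density, so its diluteness is not inherited. The argument can be rescued within your scheme by letting $\delta\to0$ slowly enough: since $\Vert\rho^{\mathrm{TF}}_{\gamma,1}\Vert_\infty\geq\Vert\rho^{\mathrm{TF}}_{\gamma,1}\Vert_2^2\to\infty$, the hypothesis already forces $N^{-2}g^3\Omega^{-6/(s-2)}\to0$, and the error terms $\delta^2$, $\gamma\delta^{-3}$, $\Omega'^{-2}\delta^{-2}$ and $N^{-2}g^3\Omega^{-6/(s-2)}\delta^{-3}$ can all be made to vanish simultaneously for a suitably slowly decaying $\delta$ — but then the $\gamma^{2/5}$ rate is lost and the choice of $\delta$ must be stated explicitly.

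The paper sidesteps this issue by taking the one-body profile to be the regularized TF density itself, $G=\prod_j\sqrt{\rho_\epsilon(\vec x_j)}$ with $\rho_\epsilon=j_\epsilon\star\rho^{\mathrm{TF}}_{g,\Omega}$, so that $\Vert\rho_\epsilon\Vert_\infty\leq\Vert\rho^{\mathrm{TF}}_{g,\Omega}\Vert_\infty$ and the diluteness hypothesis transfers verbatim to the trial density. Since that profile lives on a full neighborhood of the cylinder $r=r_\Omega$ rather than near a single point, the locally linear phase $e^{i\vec A_\Omega(\vec x_0)\cdot\vec x}$ no longer compensates the vector potential on the whole support, and the paper replaces it by the giant-vortex phase $\theta(\vec x)=\exp\{i[\frac12 r_\Omega^2\Omega]\vartheta\}$, whose residual kinetic cost involves $|r_\Omega^2/r-r|^2$ and is small precisely where the TF density concentrates.
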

\begin{proof}
The first step is to choose a suitable phase factor for the trial function to compensate the vector potential in the kinetic term as far as possible. As shown in Lemma 3.2  the set of
minimizers of
\begin{equation}
W_\Omega\left( \vec{x}\right) =V\left( \vec{x}\right) -\hbox{$\frac{1}{4}$}\Omega
^{2}r^{2} \label{ultra equality 00}
\end{equation}
is a subset of a cylinder with radius $r_{\Omega }>0$. We define the phase factor
as follows:
\begin{equation}
\Theta (\vec{x}_{1},...,\vec{x}_{N})\equiv \prod_{j=1}^{N}\theta (\vec{x}%
_{j}),
\end{equation}
with
\begin{equation}
\theta (\vec{x})\equiv \exp \left\{ i\left[ \frac{1}{2}r_{\Omega
}^{2}\Omega \right] \vartheta \right\} \mathrm{\ for\
}\vec{x}=(r,\vartheta ,z),
\end{equation}
and where $\left[ \cdot\right] $ stands for the integer part.\ Any function $%
\Psi \in L^{2}$ can be written as $\Psi =\Theta \Phi $ with $\Phi \in L^{2}$  and a straightforward computation gives
\begin{eqnarray}
\mathcal{E}_{N,g,\Omega }^{\mathrm{QM}}(\Psi ) &=&\mathcal{E}_{N,g,0}^{\mathrm{QM%
}}(\Phi )+\sum_{j=1}^{N}\int_{\Bbb{R}^{3N}}\left( -\frac{\Omega ^{2}r_{j}^{2}%
}{4}|\Phi |^{2}+\left| \left( -i\nabla _{j}-\vec{A}_{\Omega }(\vec{x}%
_{j})\right) \Theta \right| ^{2}|\Phi |^{2}\right)   \nonumber \\
&&-2\sum_{j=1}^{N}\mathrm{Re}\left( \int_{\Bbb{R}^{3N}}\Phi \Theta
^{*}\left( \left(-i\nabla _{j}-\vec{A}_{\Omega
}(\vec{x}_{j})\right) \Theta \right) \nabla _{j}\Phi ^{*}\right) .
\end{eqnarray}
Since
\begin{equation}
\left( -i\nabla -\vec{A}_{\Omega }\right) \theta =\left(
\frac{1}{r}\left[\frac{1}{2} r_{\Omega }^{2}\Omega \right]
-\frac{r\Omega}{2} \right) \theta \hat e_\vartheta,
\end{equation}
we get, using the Cauchy-Schwartz inequality combined with $2ab\leq
a^{2}+b^{2}$, 
\begin{equation}
\mathcal{E}_{N,g,\Omega }^{\mathrm{QM}}(\Psi )\leq \mathcal{E}_{N,g,0}^{\mathrm{%
QM}}(\Phi )+\sum_{j=1}^{N}\int_{\Bbb{R}^{3N}}\left( -\frac{\Omega
^{2}r_{j}^{2}}{4}|\Phi |^{2}+2\left| \frac{1}{r_{j}}\left[
\frac{1}{2}r_{\Omega }^{2}\Omega \right] -\frac{r_{j}\Omega}{2}
\right| ^{2}|\Phi |^{2}+\left| \nabla _{j}\Phi \right| ^{2}\right) .
\end{equation}
In particular, since $\left[ \frac{1}{2}r_{\Omega }^{2}\Omega
\right] =\frac{1}{2}r_{\Omega }^{2}\Omega +\kappa $ with $|\kappa
|<1$, we then have
\begin{equation}
\mathcal{E}_{N,g,\Omega }^{\mathrm{QM}}(\Psi )-N\inf W_{\Omega}\leq \widetilde{\mathcal{%
E}}_{N,g,\Omega }^{\mathrm{QM}}(\Phi ),  \label{ultra first upper
bound}
\end{equation}
with
\begin{equation}\label{tildeeqm}
\widetilde{\mathcal{E}}_{N,g,\Omega }^{\mathrm{QM}}(\Phi )\equiv
\sum_{j=1}^{N}\int_{\Bbb{R}^{3N}}\left( 2\left| \nabla _{j}\Phi
\right| ^{2}+\left( \widetilde{W}_{\Omega}-\inf W_{\Omega}\right) |\Phi |^{2}\right)
+\sum_{1\leq i<j\leq
N}\int_{\Bbb{R}^{3N}}v(|\vec{x}_{i}-\vec{x}_{j}|)|\Phi |^{2}
\end{equation}
and
\begin{equation}
\widetilde{W}_{\Omega}\left( \vec{x}\right) \equiv V\left( \vec{x}\right)
+\Omega
^{2}\left( -\frac{r^{2}}{4}+\left| \frac{r_{\Omega }^{2}}{r}-r\right| ^{2}+%
\frac{4}{\Omega ^{2}r^{2}}\right) .
\end{equation}
The functional  \eqref{tildeeqm} describes the QM energy of a non-rotating system of
particles with mass $1/2$ in the positive external potential
$(\widetilde{W}_{\Omega}-\inf W_{\Omega})$ and with a two-body interaction potential $v\geq 0$.

We need to chose a trial function for \eqref{tildeeqm}.  As as in the proof of the previous proposition we are dealing with a {\it real} quadratic form so the infimum over all $\Phi\in L^2({\mathbb R}^3)$ is the same as the infimum over all symmetric $\Phi\in L^2({\mathbb R}^3)$. Thus, by \eqref{ultra first upper bound},
\begin{equation}
E_{N,g,\Omega }^{\mathrm{QM}}-N\inf W_{\Omega}\leq \stackunder{\Phi }{\inf }\frac{%
\widetilde{\mathcal{E}}_{N,g,\Omega }^{\mathrm{QM}}(\Phi )}{||\Phi
||_{2}^{2}}. \label{ultra first upper boundbis}
\end{equation}
We denote by
$\rho _{\epsilon }\equiv j_{\epsilon }\star \rho _{g,\Omega
}^{\mathrm{TF}}$ the regularized TF density with $j_{\epsilon
}$ defined by (\ref{cut modulus}) for any $\epsilon >0$.\footnote{This $\epsilon$ is unrelated to the $\varepsilon$ defined in \eqref{epsilon_definition}.}
Observe that $||\rho _{\epsilon }||_{1}=1$ and $||\nabla
\sqrt{\rho _{\epsilon }}||_{2}=\epsilon ^{-2}$
since $|\nabla j_{\epsilon }|=\epsilon ^{-1}j_{\epsilon }$ and $%
||j_{\epsilon }||_{1}=1$ for any $\epsilon >0$. Our
trial function is defined as
\begin{equation}
{\tilde \Phi }(\vec{x}_{1},...,\vec{x}_{N})\equiv F(\vec{x}_{1},...,\vec{x%
}_{N})G(\vec{x}_{1},...,\vec{x}_{N}),  \label{second test
function}
\end{equation}
with
\begin{equation}
G(\vec{x}_{1},...,\vec{x}_{N})\equiv \stackunder{j=1}{\stackrel{N}{\prod }}%
\sqrt{\rho _{\epsilon }(\vec{x}_{j})}  \label{ultra definition
G}
\end{equation}
while the function $F$ is of the Dyson form, cf. (\ref{dyson form
1})-(\ref {deft}) and (\ref{deff}). 

The estimation of $\widetilde{\mathcal{E}}_{N,g,\Omega }^{\mathrm{QM}}(\tilde \Phi )/{||\tilde \Phi ||_{2}^{2}}$ follows closely the computations in \cite{LiebSeiringerYngvason1}, Eqs.\ (3.11)-(3.29), but with the regularized TF density
$\rho _{\epsilon }$ instead of the GP density. The diluteness condition 
$N^{-2}g^3||\rho _{g,\Omega
}^{\mathrm{TF}}||_{\infty }\to 0$ implies that the same condition is also fulfilled for $\rho _{\epsilon }$ because
\begin{equation}
||\rho _{\epsilon }||_{\infty }\leq
||\rho _{g,\Omega }^{\mathrm{TF}}||_{\infty
}||j_{\epsilon }||_{1}=||\rho _{g,\Omega
}^{\mathrm{TF}}||_{\infty }. \label{ultra diluteness condition}
\end{equation}
The GP equation, that was used in the computations in  \cite{LiebSeiringerYngvason1} to obtain Eq.\ (3.28) in that paper, is not at our disposal for $\rho _{\epsilon }$ but we can instead use the Cauchy-Schwarz inequality combined with $2ab\leq a^{2}+b^{2}$. In this way,  using also the positivity of $v$, we obtain
\begin{eqnarray}
E_{N,g,\Omega }^{\mathrm{QM}}-N\inf W_{\Omega} &\leq &\sum_{j=1}^{N}\left\|
FG\right\| _{2}^{-2}\left\{ 4\int_{\Bbb{R}^{3N}}\left| \nabla
_{j}G\right| ^{2}F^{2}+\int_{\Bbb{R}^{3N}}\left(
\widetilde{W}_{\Omega}\left( \vec{x}_{j}\right)
-\inf W_{\Omega}\right) F^{2}G^{2}\right\}   \nonumber \\
&&+4\sum_{j=1}^{N}\left\| FG\right\| _{2}^{-2}\left\{ \int_{\Bbb{R}%
^{3N}}\left| \nabla _{j}F\right| ^{2}G^{2}+\sum_{i<j}\int_{\Bbb{R}^{3N}}v(|%
\vec{x}_{i}-\vec{x}_{j}|)F^{2}G^{2}\right\} . \label{ultra first
upper boundbisbis}
\end{eqnarray}
Since $|\nabla j_{\epsilon }|=\epsilon ^{-1}j_{\epsilon }$  we have for the first term of (\ref{ultra first
upper boundbisbis}) the bound
\begin{equation}
4\left\| FG\right\| _{2}^{-2}\int_{\Bbb{R}^{3N}}\left| \nabla
_{j}G\right| ^{2}F^{2}=4\left\| FG\right\|
_{2}^{-2}\int_{\Bbb{R}^{3N}}\frac{1}{\rho _{\epsilon }\left(
\vec{x}_{j}\right) }\left| \frac{\nabla _{j}\rho _{\epsilon
}\left( \vec{x}_{j}\right) }{2\sqrt{\rho _{\epsilon }\left(
\vec{x}_{j}\right) }}\right| ^{2}F^{2}G^{2}\leq
\frac{1}{\epsilon ^{2}}.
\end{equation}
Using Eqs.\ (3.15)-(3.16) and (3.21) in \cite{LiebSeiringerYngvason1} as well as $(\widetilde{W}_{\Omega}-\inf W_{\Omega})\geq 0$, we obtain the bound
\begin{equation}
\left\| FG\right\| _{2}^{-2}\int_{\Bbb{R}^{3N}}\left(
\widetilde{W}_{\Omega}\left( \vec{x}_{j}\right) -\inf W_{\Omega}\right)
F^{2}G^{2}\leq \frac{1}{1-N||\rho _{\epsilon }||_{\infty
}I}\int_{\Bbb{R}^{3}}\mathrm{d}^{3}\vec{x}\left(
\widetilde{W}_{\Omega}-\inf W_{\Omega}\right) \rho _{\epsilon }
\end{equation}
with $I$ defined by (\ref{IJK}), provided $N||\rho _{\epsilon }||_{\infty
}I<1$ that is guaranteed by the diluteness condition.
 The last two terms in
(\ref{ultra first upper boundbisbis}) are bounded in exactly the same
way which leads to (\ref{equation for interaction}) with $\rho
_{\epsilon } $ in the place of $\rho _{g,\Omega }^{\mathrm{GP}}$.  We omit the details.   Altogether
we have the upper bound
\begin{equation}
N^{-1}E_{N,g,\Omega }^{\mathrm{QM}}-\inf W_{\Omega}\leq \frac{1}{\epsilon ^{2}}%
+\left( 1+o\left( 1\right) \right) \int_{\Bbb{R}^{3}}\mathrm{d}\vec{x}%
\left\{ \left( \widetilde{W}_{\Omega}-\inf W_{\Omega}\right) \rho _{\epsilon
}+4g\rho _{\epsilon }^{2}\right\} +o\left( 1\right) g||\rho _{\epsilon
}||_{\infty }  .  \label{ultra diluteness
condition 2}
\end{equation}
Since $\rho_\epsilon$ is normalized, 
\begin{equation}
||\rho _{\epsilon }||_{2}\leq ||\rho _{\epsilon }||_{\infty}\leq  ||\rho _{g,\Omega }^{\mathrm{%
TF}}||_{\infty}
\end{equation}
by \eqref{ultra diluteness condition}. Also, 
\begin{equation}
 \Omega^{-2s/(s-2)}\inf W_{\Omega}=E_{0,1}^{%
\mathrm{TF}},  \label{ultra equality 0}
\end{equation}
and by Lemma \ref{form of M} all minimizing points have the same radial coordinate $r_{\Omega }=r_{0} \Omega
 ^{\frac{2}{s-2}}$ with $r_{0}$ the radius of the set $%
\mathcal{M}$. The inequality (\ref{ultra diluteness
condition 2}) now implies
\begin{eqnarray}
\Omega  ^{-2s/(s-2)}N^{-1}E_{N,g,\Omega }^{\mathrm{QM%
}}-E_{0,1}^{\mathrm{TF}} &\leq &\Omega ^{-2s/(s-2)}\left\{ \int_{\Bbb{R}^{3}}%
\mathrm{d}\vec{x}\left( \widetilde{W}_{\Omega}-\inf W_{\Omega}\right) \rho
_{\epsilon
}\right\} \left( 1+o\left( 1\right) \right)   \nonumber \\ &&+5\Omega  ^{-2s/(s-2)}g||\rho _{g,\Omega }^{%
\mathrm{TF}}||_{\infty}\left( 1+o\left( 1\right) \right)   \nonumber \\
&&+\Omega  ^{-2s/(s-2)}\epsilon ^{-2}. \label{ultra
equality 1}
\end{eqnarray}
We now have to
bound each term of the right hand side. For the second term we use that
\begin{equation}
\Omega ^{-2s/(s-2)}g||\rho _{g,\Omega
}^{\mathrm{TF}}||_{\infty }=\gamma||\rho _{\gamma,1
}^{\mathrm{TF}}||_{\infty }\leq o\left( 1\right)  \label{ultra
equality 2}
\end{equation}
by (\ref{TFdensityscalingstrong}) and because $||\rho _{\gamma,1
}^{\mathrm{TF}}||_{\infty }\leq O(\gamma^{-3/5})$ by Lemma 3.2. The last term in \eqref{ultra equality 1} is $o(1)$ as long as $\epsilon\gg \Omega^{-s/(s-2)}$.

It remains to consider the first term in \eqref{ultra
equality 1}. By scaling we have
\begin{eqnarray}\label{4.58}
\Omega  ^{-2s/(s-2)}\int_{\Bbb{R}^{3}}\mathrm{d}%
\vec{x}\left( \widetilde{W}_{\Omega}-\inf W_{\Omega}\right) \rho _{\epsilon } &=&\int_{\Bbb{%
R}^{3}}\mathrm{d}\vec{x}\left\{ V\left( \vec{x}\right) -\hbox{$\frac{1}{4}$}%
r^{2}+\left| \frac{r_{0}^{2}}{r}-r\right| ^{2}+\frac{4}{\Omega
^{\prime
2}r^{2}}\right\} \widetilde{\rho }_{\epsilon }\left( \vec{x}\right) -E_{0,%
1}^{\mathrm{TF}},  \nonumber \\
&&
\end{eqnarray}
with
\begin{equation}
\widetilde{\rho }_{\epsilon }(\vec x)=\Omega ^{\frac{6}{s-2}%
}\rho _{\epsilon }\left(\Omega ^{\frac{2}{s-2}}%
\vec{x}\right) =j_{\widetilde{\epsilon }}\star \rho _{\gamma,1%
}^{\mathrm{TF}}(\vec x),\qquad \widetilde{\epsilon }\equiv \Omega  ^{-\frac{2}{s-2}%
}\epsilon . 
\end{equation}
As $\gamma\to 0$ the support of $\rho _{\gamma,1%
}^{\mathrm{TF}}$ becomes concentrated on the set $\mathcal M$ where $V\left( \vec{x}\right) -\hbox{$\frac{1}{4}$}%
r^{2}$ is minimized and all points in $\mathcal M$ have the same radial coordinate, $r_0$. Moreover, outside of the support of $\rho _{\gamma,1%
}^{\mathrm{TF}}$ the regularized density $\widetilde{\rho }_{\epsilon }$ decreases exponentially if $d(\vec x)/{\widetilde \epsilon}\to\infty$ where $d(\vec x)$ is the distance of $\vec x$ from the support of $\rho _{\gamma,1%
}^{\mathrm{TF}}$ (see also the proof of Theorem 3.1). This implies that the right hand side of \eqref{4.58} tends to zero as $\gamma\to 0$, $\Omega\to \infty$ and $\epsilon\to 0$ with $\epsilon^{-1}\Omega^{-s/(s-2)}\to 0$.
\end{proof}

%%%%%%%%%%%%%%%%%%%%%%%%
%%%%%%%%%%%%%%%%%%%%%%%%%%%%

\subsection{Convergence of the QM particle density\label{Section
Griffiths}}
\subsubsection{The case $\omega<\infty$}
\emph{Proof of Theorem \ref{TF theorem 2}:} We use Griffiths' argument
\cite {Griffiths1} in the same way as for an analogous problem in
\cite{LiebSimon1977}.
Take any bounded function $f:\RT\to \R$ and for any $%
\sigma \in \left[ \delta ,\delta \right] $ ($\delta >0$) perturb
the Hamiltonian $H_{N}$ with the external potential $\sigma
g^{s/(s+3)}f( g^{-1/(s+3)} \vec{x}).$ Because $f$ is
bounded, the statements (i)-(ii) of Theorem \ref{TF theorem 1} can
also be proven for the perturbed external potential $ \{V\left(
\vec{x}\right) +\sigma g^{s/(s+3)}f( g^{-1/(s+3)}
\vec{x})
\}$. Namely, if $g\to\infty$  and $\omega\geq0$ is fixed then the corresponding ground state $%
E_{g,\Omega ,\sigma }^{\mathrm{QM}}\left( N\right) $ converges to
\begin{equation}
\stackunder{N\rightarrow +\infty }{\lim }\left\{
g^{-s/(s+3)}N^{-1} E_{g,\Omega ,\sigma }^{\mathrm{QM}}\left(
N\right) \right\} =E_{1,\omega ,\sigma }^{\mathrm{TF}}\mathrm{\
for\ any\ }\sigma \in \left[ \delta ,\delta \right] ,\mathrm{\
}\delta
>0 \label{Griffiths proof 1}
\end{equation}
where $E_{1,\omega ,\sigma }^{\mathrm{TF}}$ is the TF energy with $V$ replaced by $V+\sigma f$.
Consequently, for any approximated ground states $\Psi _{N}$ of
$H_{N}$ we have
\begin{equation}
\left\langle \Psi _{N},\left[ H_{N,\sigma }-H_{N}\right] \Psi
_{N}\right\rangle =Ng^{s/(s+3)}\sigma \left\langle f,\widetilde{\rho }%
_{N}\right\rangle ,  \label{Griffiths proof 2}
\end{equation}
with $\widetilde{\rho }_{N}( \vec{x}) =g^{3/(s+3)}\rho _{N}(
g^{1/(s+3)}\vec{x}) $. Hence, the Rayleigh-Ritz principle
for any $\sigma \in \left[ \delta ,\delta \right] $ leads to
\begin{equation}
E_{g,\Omega ,\sigma }^{\mathrm{QM}}\left( N\right) -E_{g,\Omega ,0}^{\mathrm{%
QM}}\left( N\right) \leq Ng^{s/(s+3)}\sigma \left\langle
f,\widetilde{\rho
}_{N}\right\rangle +\left\langle \Psi _{N},\left[ H_{N}-E_{g,\Omega }^{%
\mathrm{QM}}\left( N\right) \right] \Psi _{N}\right\rangle .
\label{Griffiths proof 3}
\end{equation}
Because (\ref{TF big interaction}) and (\ref{TFdensityscaling})
are still valid with $V( \vec{x}) $ replaced by $V( \vec{x})
+\sigma g^{s/(s+3)}f(g^{-1/(s+3)}\vec{x}) ,$ the
previous inequality combined with (\ref{Griffiths proof 1})
implies in the limit $N\rightarrow \infty $ that
\begin{equation}
\sigma ^{-1}\left[ E_{1,\omega ,\sigma }^{\mathrm{TF}}-E_{1,\omega
,\sigma
}^{\mathrm{TF}}\right] \leq \stackunder{N\rightarrow +\infty }{\lim \inf }%
\left\langle f,\widetilde{\rho }_{N}\right\rangle \mathrm{\ for\;any\ }%
\sigma \in \left( 0,\delta \right] ,  \label{Griffiths proof 4}
\end{equation}
whereas a negative parameter $\sigma \in \left[ -\delta ,0\right)
$ reverses the inequality. The proof of the differentiability of
$E_{1,\omega ,\sigma
}^{\mathrm{TF}}$ at $\sigma =0$ is deduced from similar estimations as from (%
\ref{Griffiths proof 2}) to (\ref{Griffiths proof 4}) combined
with the continuity of the function $\sigma \mapsto \left\langle
f,\rho _{1,\omega
,\sigma }^{\mathrm{TF}}\right\rangle ,$ where $\rho _{1,\omega ,\sigma }^{%
\mathrm{TF}}$ is the minimizer of the variational problem
$E_{1,\omega
,\sigma }^{\mathrm{TF}}$. We omit the details. In other words, we have $%
\partial _{\sigma }E_{1,\omega ,\sigma }^{\mathrm{TF}}=\left\langle f,\rho
_{1,\omega }^{\mathrm{TF}}\right\rangle $. Consequently, by
(\ref{Griffiths
proof 4}) and its reversed inequality, we obtain Theorem \ref{TF theorem 2}%
.{}\\ {\phantom a} {\hfill $\square$}
\subsubsection{The case $\omega\to\infty$}

In the case of ultrarapid rotations Griffiths' argument
 is not as easily applicable as in the previous situation. There are two complications. First, perturbing $V$ with a scaled additional term $\sigma f$ leads to a potential that is not homogeneous and the proof of the upper bound for the QM energy has to be modified in order to get  (\ref{Griffiths proof 4}) with $E_{0,1,\sigma
}^{\mathrm{TF}}$ replacing $E_{1,\omega ,\sigma }^{\mathrm{TF}}$. Secondly, 
 if $\mathcal{M}$ consists of more than one point the variational problem for $E_{0,1,\sigma }^{\mathrm{TF}}$
does not have a unique minimizer and  $E_{0,1%
,\sigma }^{\mathrm{TF}}$ is not differentiable at $\sigma =0$ in general. To get around this complication, we would need to
perturb the Hamiltonian with an additional term to 
avoid the degeneracy of the variational problem. An example of such an argument is given in \cite{LS06}. But since we are content with proving that the density is concentrated on $\mathcal M$ in the limit and not striving to obtain the exact limiting measure on $\mathcal M$ we can ignore all these problems and use the following simpler argument.

\emph{Proof of Theorem \ref{TF theorem 3}:\/} If $\mathcal K$ is any set with a positive distance from $\cal M$, then $W(\vec x)-E^{\rm TF}_{0,1}\geq c>0$ on ${\mathcal K}$ with some strictly positive number $c$. Hence, by Eqs.\ (\ref{Griffiths proof 6})-(\ref{Griffiths proof 6bis}),
\beq \label{K}\Omega ^{-2s/(s-2)}N^{-1}\left\langle
\Psi_{N},H_{N}\Psi_N\right\rangle-E^{\rm TF}_{0,1}\geq c\int_{\mathcal K}\hat \rho_N.
\eeq
On the other hand, by Proposition 4.3 we know that the left hand side of \eqref{K} tends to zero as $N\to\infty$, so $\int_{\mathcal K}\hat \rho_N\to 0$.{}\\ {\phantom a} {\hfill $\square$}
\subsection{Remarks on \lq flat' trapping potentials}
As mentioned in the Section 2 the case of a \lq flat' trap, that corresponds formally to $s=\infty$, can be treated in essentially the same way as we have done for $s<\infty$. By a flat potential we mean that $V$ is 0 inside some open, bounded set $\mathcal B$ with a smooth boundary and $\infty$ outside. More precisely, the kinetic term in the many-body Hamiltonian \eqref{ham2} and the GP functional \eqref{GP functional} are defined with Dirichlet conditions on the boundary of $\mathcal B$, but Neumann conditions lead, in fact, to the same results in the large $g$, large $\Omega$ limit.

For $s=\infty$  Eq.\ \eqref{omega} resp.\ \eqref{gamma} reduces to $\omega=g^{-1/2}\Omega$ resp.\ $\gamma=\Omega^{-2}g$ and the TF energy scales as $g^{-1}E^{\rm TF}_{g,\Omega}=E^{\rm TF}_{1,\omega}$, resp.\ $\Omega^{-2}
E^{\rm TF}_{g,\Omega}=E^{\rm TF}_{\gamma,1}$. Theorem 2.1 becomes in case (i)
$ \left\{ g^{-1}N^{-1} E_{g,\Omega }^{\mathrm{QM}}\left( N\right)  \right\}
\to E^{\rm TF}_{1,0}$, in case (ii) $ \left\{ g^{-1}N^{-1} E_{g,\Omega }^{\mathrm{QM}}\left( N\right)  \right\}
\to E^{\rm TF}_{1,\omega}$, and in case (iii) $ \left\{ \Omega^{-2}N^{-1} E_{g,\Omega }^{\mathrm{QM}}\left( N\right)  \right\}
\to E^{\rm TF}_{0,1}$.

The proofs require only some minor modifications. For instance, in the proofs of Therorem 3.1 and Propositions 4.2--4.3 the trial functions have to be modified in order to take the boundary condition into account.  In the case of ultrarapid rotations it has to be noted that the set $\mathcal M$ is now a subset of the boundary $\partial \mathcal B$ of $\mathcal B$,  consisting of the points on $\partial \mathcal B$ where the centrifugal potential $-r^2/4$ is minimal, i.e., where $r$ is maximal. If the boundary is smooth one can still use a Taylor expansion for the proof of an analogue of Theorem 3.3 (with different error terms).

\appendix

\let\oldsect\section
\def\section#1{\def\thesection{Appendix \Alph{section}} \oldsect{#1}\def\thesection{\Alph{section}}}

\section{The TF density at large rotational velocities\label{Section Appendix}}
%\subsection{The TF density at large rotational velocities}
The TF density is explicitly given by \eqref{TF density}, i.e., 
\beq\rho _{g,\Omega }^{\mathrm{TF}}\left( \vec{x}\right)
=\dfrac{1}{2g}\left[
\mu _{g,\Omega }^{\mathrm{TF}}+\hbox{$\frac{1}{4}$}\Omega ^{2}r^{2}-V\left( \vec{x}%
\right) \right] _{+}.\eeq
To get a picture how it changes with the parameters, in particular as 
$\omega=g^{-(s-2)/(2s+6)}\to\infty$, it is convenient to use the scaling
\beq \Omega^{6/{(s-2)}}\rho _{g,\Omega
}^{\mathrm{TF}}\left(\Omega^{
2/{(s-2)}}\vec{x}\right)=\rho _{\gamma,1
}^{\mathrm{TF}}\left(\vec{x}\right) \eeq
to eliminate the dependence of the potential on $\Omega$ and consider
\beq\label{rhogamma1}\rho _{\gamma,1}^{\mathrm{TF}}\left( \vec{x}\right)
=\dfrac{1}{2\gamma}\left[
\mu _{\gamma,1 }^{\mathrm{TF}}+\hbox{$\frac{1}{4}$}r^{2}-V\left( \vec{x}%
\right) \right] _{+}\eeq with $\gamma=\omega^{-2(s+3)/(s-2)}$. 
As $\omega$ increases from 0 to $\infty$, $\gamma$ decreases from $+\infty$ to zero and, due to the normalization,   the chemical potential  $\mu _{\gamma,1 }^{\mathrm{TF}}$ decreases monotonically from $+\infty$ to $E^{\rm TF}_{0,1}=\min\left\{V\left( \vec{x}%
\right)-\hbox{$\frac{1}{4}$}r^{2}\right\}$. Since $V$ is homogeneous of order $s>2$ it is clear that $E^{\rm TF}_{0,1}<0$. By continuity there is a $\gamma_c$ (and a corresponding $\omega_c$) such that $\mu _{\gamma_c,1 }^{\mathrm{TF}}=0$. Explicitly,
\beq
\gamma_c=\hbox{$\frac 12$}\int\left[
\hbox{$\frac{1}{4}$}r^{2}-V\left( \vec{x}%
\right) \right] _{+}{\rm d}\vec x.
\eeq
Since $V( 0)=0$ and $V$ is continuous, we have $\rho _{\gamma,1}^{\mathrm{TF}}( \vec{x})>0$ in a neighborhood of $ 0$ for $\gamma>\gamma_c$ (i.e., $\omega<\omega_c$).  For $\gamma< \gamma_c$, on the other hand,
$\rho _{\gamma,1}^{\mathrm{TF}}( \vec{x})=0$ in a cylinder around the $z$ axis, because  $\mu _{\gamma,1 }<0$ and $V\geq 0$. In other words, for  $\gamma<\gamma_c$  (i.e., $\omega>\omega_c$), the centrifugal force creates
a \lq hole' of radius $\geq 2\sqrt{-\mu _{\gamma,1 }}$ in the density.

To describe this in a little more detail we  introduce cylindrical coordinates $(r,z,\vartheta)$ and consider the density as a function of $r$ at fixed $(z,\vartheta)$, first for $z=0$. Using homogeneity of $V$ we see that the boundary of the support of $\rho _{\gamma,1}^{\mathrm{TF}}$ is determined by the solutions to the equation
\beq\label{A5}a({\vartheta})r^s-r^2/4= \mu _{\gamma,1 }^{\mathrm{TF}}\eeq
with $a(\vartheta)=V(1,0,\vartheta)>0$. If $\gamma>\gamma_c$ (i.e., $\omega<\omega _c$) so $\mu _{\gamma,1 }^{\mathrm{TF}}>0$, the equation has one solution, $r(\vartheta)^+_\gamma>0$, and $\rho _{\gamma,1}^{\mathrm{TF}}(r,0,\vartheta)>0$ for $r<r(\vartheta)^+_\gamma$ but $\rho _{\gamma,1}^{\mathrm{TF}}=0$ outside.
Eq.\ \eqref{A5} has two solutions, $r(\vartheta)^{\pm}_\gamma$, if $\gamma_c> \gamma>\gamma_\vartheta$ (i.e., $\omega_c< \omega<\omega_\vartheta$) where $\gamma_\vartheta\geq 0$ is determined by solving \eqref{A5} together with
\beq sa(\vartheta) r^{s-1}-r^2/2=0.\eeq
The density $\rho _{\gamma,1}^{\mathrm{TF}}(r,0,\vartheta)$ is nonzero for $r$ inside the interval
$(r(\vartheta)^{-}_\gamma,r(\vartheta)^{+}_\gamma)$ but vanishes outside. For $\gamma=\gamma_\vartheta$ the interval shrinks to a point with radial coordinate
\beq\label{rlim}
r(\vartheta)^{\rm lim}=(2sa(\vartheta))^{-1/(s-2)}=\sup_{\gamma\geq \gamma_\vartheta} r(\vartheta)^{-}_\gamma=\inf_{\gamma\geq \gamma_\vartheta} r(\vartheta)^{+}_\gamma.
\eeq
The chemical potential $\mu _{\gamma_\vartheta,1 }^{\mathrm{TF}}$ and hence $\gamma_\vartheta$
is determined by inserting \eqref{rlim} into \eqref{A5}. If $\gamma_\vartheta>0$, then $\rho _{\gamma,1}^{\mathrm{TF}}$  is identically zero in the $\vartheta$ direction for $\gamma_\vartheta\geq \gamma\geq 0$.  If $\gamma_\vartheta=0$, on the other hand, then \eqref{A5} implies that $(r_\vartheta^{\rm lim},0,\vartheta)\in{\mathcal M}$ and thus $r(\vartheta)^{\rm lim}=r_0$ by Lemma \ref{form of M}.

For $z\neq 0$ we can use the homogeneity of $V$ to write $V(\vec x)=r^sV(1,z/r,\vartheta)$a and apply the considerations above to a ray with fixed $z/r$ and $\vartheta$. It is, however, more interesting to consider what happens at fixed $z\neq 0$ in the case that $V$ is monotonically increasing in $|z|$, for instance if $V(\vec x)=V_0(\vec r)+c|z|^s$.   At fixed $z$ the term $c|z|^s$ acts as a shift of the chemical potential, $\mu _{\gamma,1 }^{\mathrm{TF}}\to \mu _{\gamma,1 }^{\mathrm{TF}}-c|z|^s$.  Hence for a \lq hole' to appear at  $z> 0$ it is not necessary that $\mu _{\gamma,1 }^{\mathrm{TF}}$ becomes negative, it appears already when
$\mu _{\gamma,1 }^{\mathrm{TF}}-c|z|^s<0$. For any $\gamma<\infty$ the density has therefore a \lq hole' for sufficiently large $|z|$, the width increasing with $|z|$. As  $\gamma$ decreases from an initial value $>\gamma_c$  the \lq hole'  moves down to lower values of $z$, reaching $z=0$ at $\gamma=\gamma_c$.

As an example, consider $V(r,z,\vartheta)=r^s(1-\varepsilon \sin^2 \vartheta)+|z|^s$ with
$0<\varepsilon<1$. Here $\gamma_\vartheta>0$ for all $\vartheta\in [0,2\pi)$ except for
$\vartheta=0$ and $\vartheta=\pi$. As $\gamma\to 0$ the density converges to 
$\left\{\half \delta(x-1)+\half\delta(x+1)\right\}\delta(y)\delta(z)$. If  $\varepsilon=0$, i.e., $V=r^s+c|z^s|$, then $\gamma_\vartheta=0$ for all $\vartheta$ and the limiting density for $\gamma\to 0$ is $(2\pi)^{-1}\delta(r-1)\delta(z)$.

\let\section\oldsect
\vspace{.5cm}
\noindent
{\bf Acknowledgments.} 
This work was supported by an Austrian Science Fund (FWF) grant P17176-N02. JY would like to thank the Niels Bohr International Academy and Nordita, Copenhagen, for hospitality and Chris Pethick, Gentaro Watanabe  and Gordon Baym for discussions.
\vspace{.5cm}


\begin{thebibliography}{99}

        \bibitem[A]{Afta}  \textsc{A. Aftalion}, \emph{Vortices in Bose-Einstein Condensates}, Birkh\"auser, 2006.

    \bibitem[A1]{Afta1} \textsc{A. Aftalion, T. Riviere} Vortex energy and vortex bending for a rotating Bose-Einstein condensate, {Phys.\ Rev.} A  \textbf{64} (2001), 043611.
    
    \bibitem[B]{B}		\textsc{G. Baym}, Rapidly Rotating Bose-Einstein Condensates, \emph{J. Low Temp. Phys.} \textbf{138} (2005), 601-610.
    
    \bibitem[BP]{baym} 	\textsc{G. Baym, C.J. Pethick}, Vortex Core Structure and Global Properties of Rapidly Rotating Bose-Einstein Condensates, \emph{Phys. Rev. A} \textbf{69} (2004), 043619.

        \bibitem[CRY1]{CRY06}  \textsc{M. Correggi, T. Rindler-Daller, J. Yngvason}, Rapidly Rotating Bose-Einstein Condensates in Strongly Anharmonic Traps, ArXiv \emph{math-ph/0606058}, \emph{J. Math. Phys.} (in press).

    \bibitem[CRY2]{CRY07} \textsc{M. Correggi, T. Rindler-Daller, J. Yngvason}, Rapidly Rotating Bose-Einstein Condensates in Homogeneous Traps. Preprint

    \bibitem[D]{dyson} \textsc{F.\ Dyson}, Ground-State Energy of a Hard Sphere Gas, \emph{Phys.\ Rev.} (1957), 20-26.
    
    \bibitem[ECHSC]{ECHSC}	\textsc{P. Engels, I. Coddington, P.C. Haljan, V. Schweikhardt, E.A. Cornell}, Observation of Long-lived Vortex Aggregates in Rapidly Rotating Bose-Einstein Condensates, \emph{Phys. Rev. Lett.} \textbf{90} (2003), 170405.
    
    \bibitem[F]{Fe}		\textsc{A.L. Fetter}, Rotating Vortex Lattice in a Bose-Einstein Condensate Trapped in Combined Quadratic and Quartic Radial Potentials, \emph{Phys. Rev. A} \textbf{64} (2001), 063608.

    \bibitem[FB]{FiB}   \textsc{U.R. Fischer, G. Baym}, Vortex States of Rapidly Rotating Dilute Bose-Einstein Condensates, \emph{Phys. Rev. Lett.} \textbf{90} (2003), 140402.
    
    \bibitem[FZ]{FZ}	\textsc{H. Fu, E. Zaremba}, Transition to the Giant Vortex State in a Harmonic-plus-Quartic Trap, \emph{Phys. Rev. A} \textbf{73} (2006), 013614.

    \bibitem[G]{Griffiths1} \textsc{R. Griffiths}, A Proof that the Free Energy of a Spin System is Extensive, \emph{J. Math. Phys.} \textbf{5} (1964), 1215.
    
    \bibitem[KTU]{KTU}	\textsc{K. Kasamatsu, M. Tsubota, M. Ueda}, Giant Hole and Circular Superflow in a Fast Rotating Bose-Einstein Condensate, \emph{Phys. Rev. A} \textbf{66} (2002), 053606.
    
    \bibitem[KB]{KB} 	\textsc{G.M. Kavoulakis, G. Baym}, Rapidly Rotating Bose-Einstein Condensates in Anharmonic Potentials, \emph{New J. Phys.} \textbf{5} (2003), 51.

    \bibitem[L1]{lieb}   \textsc{E.H. Lieb}, The Stability of Matter: From Atoms to Stars, \emph{ Bull. Am. Math. Soc.} \textbf{22} (1990), 1-49.

    \bibitem[L2]{lieb_TF}   \textsc{E.H. Lieb},  Thomas-fermi and related theories of atoms and molecules, \emph{Rev. Mod. Phys.} \textbf{53} (1981), 603-641.

    \bibitem[LL]{LL01}  \textsc{E.H. Lieb, M. Loss}, \textit{Analysis (2nd. ed.)}, American Mathematical Society (2001).

    \bibitem[LSe]{LS06}  \textsc{E.H. Lieb, R. Seiringer}, Derivation of the Gross-Pitaevskii Equation for Rotating Bose Gases, \emph{Commun. Math. Phys.} \textbf{264} (2006), 505-537.

    \bibitem[LSi]{LiebSimon1977}    \textsc{E.H. Lieb, B. Simon}, The Thomas-Fermi theory of atoms, molecules and solids, \emph{Adv. Math.} \textbf{23} (1977), 22-116.
   
    \bibitem[LSY1]{LiebSeiringerYngvason1}  \textsc{E.H. Lieb, R. Seiringer, J. Yngvason}, Bosons in a Trap: A Rigorous Derivation of the Gross-Pitaevskii Energy Functional, \emph{Phys. Rev. A} \textbf{61} (2000), 0436021-13.

    \bibitem[LSY2]{Lieb2}   \textsc{E.H. Lieb, R. Seiringer, J. Yngvason}, A Rigorous Derivation of the Gross-Pitaevskii Energy Functional for a Two-dimensional Bose Gas, \emph{Commun. Math. Phys.} \textbf{224} (2001), 17-31.

    \bibitem[LSY3]{LSY3}    \textsc{E.H. Lieb, R. Seiringer, J. Yngvason}, One-Dimensional Bosons in Three-Dimensional Traps, \emph{Phys. Rev. Lett.} {\bf 91} (2003), 1504011-4.

    \bibitem[LSY4]{LSY} \textsc{E.H. Lieb, R. Seiringer, J. Yngvason}, One-Dimensional Behavior of Dilute, Trapped Bose Gases,
\emph{Commun. Math. Phys.} {\bf 244} (2004), 347-393.

    \bibitem[LSY5]{LSSY05}  \textsc{E.H. Lieb, R. Seiringer, J.P. Solovej, J. Yngvason}, \emph{The Mathematics of the Bose Gas and its Condensation}, Oberwolfach Seminars \textbf{34}, Birkh\"{a}user (2005).

    \bibitem[LY]{LiebYngvason1}  \textsc{E.H. Lieb, J. Yngvason}, Ground State Energy of the Low Density Bose Gas, \emph{Phys. Rev. Lett.} \textbf{80} (1998), 2504-2507.
    
    \bibitem[Lu]{Lund1}	\textsc{E. Lundh}, Multiply Quantized Vortices in Trapped Bose-Einstein Condensates, \emph{Phys. Rev. A} \textbf{65} (2002), 043604. 

    \bibitem[SY]{SY}    \textsc{K. Schnee, J. Yngvason}, Bosons in Disc-Shaped Traps: From 3D to 2D,  \emph{Commun. Math. Phys.} \textbf{269} (2007), 659-691.

    \bibitem[S1]{S03}   \textsc{R. Seiringer}, Ground State Asymptotics of a Dilute, Rotating Gas, \emph{J. Phys. A: Math. Gen.} \textbf{36} (2003), 9755-9778.

    \bibitem[S2]{Seiringer06-1} \textsc{R. Seiringer}, Dilute, Trapped Bose Gases and Bose-Einstein Condensation, in \emph{Large Coulomb Systems}, Lect. Notes Phys. \textbf{695}, 251-276, J. Derezinski, H. Siedentop, eds., Springer (2006).
    
    \bibitem[WGBP]{WP} 	\textsc{G. Watanabe, G. Baym, S.A. Gifford, C.J. Pethick}, Global Structure of vortices in rotating Bose-Einstein condensates, \emph{Phys.\  Rev.\  A}, \textbf{74} (2006), 063621. 







\end{thebibliography}
\end{document}